\newtheorem{theorem}{Theorem}[section]
\newtheorem{lemma}[theorem]{Lemma}
\newtheorem{corollary}[theorem]{Corollary}
\newtheorem{proposition}[theorem]{Proposition}
\theoremstyle{definition}
\newtheorem{definition}[theorem]{Definition}
\theoremstyle{remark}
\newtheorem{observation}[theorem]{Observation}
\numberwithin{equation}{section}
\newcommand{\m}{\mathcal }
\newcommand{\ket}[1]{| #1 \rangle}    
\def\R{\mathbb R}    
\def\N{\mathbb N}    
\begin{document}

\title{Dirichlet polynomials, Majorization, and Trumping}

\author[Rajesh Pereira and Sarah Plosker]{Rajesh Pereira and Sarah Plosker}
\address{Department of Mathematics \& Statistics, University of Guelph, Guelph, ON, Canada N1G 2W1}
\email{pereirar@uoguelph.ca, splosker@uoguelph.ca}

\begin{abstract}
Majorization and trumping are two partial orders which have proved useful in quantum information theory.  We show some relations between these two partial orders and generalized Dirichlet polynomials, Mellin transforms, and completely monotone functions.  These relations are used to prove a succinct generalization of Turgut's characterization of trumping.

\end{abstract}
\keywords{trumping, catalytic majorization, entanglement-assisted local transformation, Dirichlet polynomials, completely monotone functions, higher-order convex sequences}

\subjclass[2010]{81P40, 81P68,  	11F66}

\maketitle
\section{Introduction}

Majorization is a partial order on
real vectors that is used primarily in matrix analysis, although historically it has been explored in economics as a means of  studying inequalities of wealth and inequalities of income \cite{MOA11}.

Entanglement is the crux of quantum information theory; being able to manipulate entanglement is what makes quantum information such a powerful tool. A major area of research in quantum information theory is the problem of entanglement transformations: that is, can we manipulate a pure state of a composite system via local operations and classical communication (LOCC) and have it transform into another particular state? Recently, Nielsen \cite{Nie99} answered this question using    majorization theory, thus giving
majorization  an important role in quantum information theory.

If two parties, Alice and Bob, can  only carry out operations on their local systems and have a
classical communication channel to transmit bits, it is called \emph{local operations and classical communication (LOCC)} \cite{Betal99}.
The most general implementation of LOCC is a potentially unlimited back-and-forth scenario:  (1) Alice applies a quantum channel to her system and communicates classical information to Bob, and  (2) Bob applies a quantum channel on his system and communicates classical information to Alice. However, in  \cite{LoPr01} the authors prove that all LOCC protocols can be simplified into a one-way communication setup. Physically, it follows from this theorem  that an arbitrary protocol transforming the state $\ket{\psi^A}$ to the state $\ket{\psi^B}$ using local operations and two-way classical communication can be simulated by a one-way communication protocol from Alice to Bob.

Suppose $|\psi\rangle$ is a pure state in a composite system $ \m H_A\otimes \m H_B$.  Using the language and notation set out in \cite{Nie99}, we denote by $\rho_\psi \equiv \mbox{tr}_B(|\psi\rangle \langle
\psi|)$ the state of Alice's system, and  $\lambda_\psi$ the vector of eigenvalues of
$\rho_\psi$.   We say that $|\psi\rangle \rightarrow |\phi\rangle$, read ``$|\psi\rangle$
transforms to $|\phi\rangle$'' if $|\psi\rangle$ can be
transformed into $|\phi\rangle$ by local operations and potentially
unlimited two-way classical communication.  Then we have:

\begin{theorem} \cite{Nie99} We can transform $|\psi\rangle$  to $|\phi\rangle$ using local
operations and classical communication if and only if $\lambda_\psi$
is majorized by $\lambda_\phi$.  That is,
\begin{eqnarray}
|\psi\rangle \rightarrow |\phi\rangle \mbox{ iff } \lambda_{\psi} \prec
\lambda_{\phi}.
\end{eqnarray}
\end{theorem}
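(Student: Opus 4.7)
The plan is to establish both directions of the equivalence, beginning with necessity and then addressing sufficiency. For the necessity direction, assume $|\psi\rangle \to |\phi\rangle$ via LOCC. First, I would invoke the Lo--Popescu reduction cited above to assume without loss of generality that the protocol is one-way from Alice to Bob: Alice performs a measurement with Kraus operators $\{M_k\}$ satisfying $\sum_k M_k^\dagger M_k = I$, and on outcome $k$ (with probability $p_k$) Bob applies a unitary $U_k$ that completes the transformation. Deterministic success means $(M_k \otimes U_k)|\psi\rangle = \sqrt{p_k}\,|\phi\rangle$, which upon tracing out Bob yields the crucial identity
\[
M_k \rho_\psi M_k^\dagger = p_k\, \rho_\phi, \qquad k = 1, 2, \ldots.
\]
Since $(M_k \rho_\psi^{1/2})(M_k \rho_\psi^{1/2})^\dagger = p_k \rho_\phi$, the polar decomposition yields unitaries $V_k$ with $M_k \rho_\psi^{1/2} = \sqrt{p_k}\,\rho_\phi^{1/2} V_k$. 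Substituting these into the trivial identity $\rho_\psi = \sum_k \rho_\psi^{1/2} M_k^\dagger M_k \rho_\psi^{1/2}$ then gives
\[
\rho_\psi = \sum_k p_k\, V_k^\dagger \rho_\phi V_k,
\]
exhibiting $\rho_\psi$ as a convex mixture of unitary conjugates of $\rho_\phi$. By the classical theorem of Hardy--Littlewood--P\'olya (in its Uhlmann form for Hermitian operators), such a mixture forces $\lambda_\psi \prec \lambda_\phi$.

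For sufficiency, assume $\lambda_\psi \prec \lambda_\phi$. The Hardy--Littlewood--P\'olya theorem gives a doubly stochastic $D$ with $\lambda_\psi = D\lambda_\phi$, and a further classical result factors $D$ as a finite product of $T$-transforms, each of which is the identity outside a single $2 \times 2$ principal block in which it acts as a convex combination of the identity and a swap. It therefore suffices to realize one $T$-transform via LOCC, reducing the construction to the case where $\lambda_\psi$ and $\lambda_\phi$ differ in only two coordinates. For this two-dimensional building block, I would write both $|\psi\rangle$ and $|\phi\rangle$ in aligned Schmidt bases and explicitly write down a two-outcome POVM for Alice whose effects are diagonal in her Schmidt basis and scaled so that, after Bob applies a permutation unitary swapping the two affected Schmidt vectors in one branch and the identity in the other, both post-measurement branches coincide with $|\phi\rangle$. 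Composing these elementary protocols along the $T$-transform factorization of $D$ yields the desired LOCC implementation of $|\psi\rangle \to |\phi\rangle$.

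The \emph{main obstacle}, in my view, lies in the sufficiency direction: verifying that the analytic data of an arbitrary $T$-transform corresponds to a genuine POVM (positive effects summing to the identity), and treating correctly the case in which $|\psi\rangle$ and $|\phi\rangle$ have Schmidt decompositions of different rank, handled by padding the shorter spectrum with zeros. The necessity computation is essentially linear algebra once the Lo--Popescu reduction is invoked, though some care is required with the polar decomposition when $\rho_\phi$ is not full rank, in which case the $V_k$ are initially partial isometries and must be extended to unitaries in the standard fashion.
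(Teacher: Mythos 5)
This statement is quoted in the paper from Nielsen's work \cite{Nie99} and is not proved there, so the only meaningful comparison is with Nielsen's original argument (and its textbook form) --- which is essentially what you have reproduced, correctly. Your necessity direction (Lo--Popescu reduction to one-way protocols, the identity $M_k\rho_\psi M_k^\dagger=p_k\rho_\phi$, polar decomposition to get $\rho_\psi=\sum_k p_k V_k^\dagger\rho_\phi V_k$, and the mixture-of-unitary-conjugates criterion for $\lambda_\psi\prec\lambda_\phi$) and your sufficiency direction (Hardy--Littlewood--P\'olya doubly stochastic matrix, factorization into $T$-transforms, and an explicit two-outcome measurement by Alice with a conditional swap by Bob for each elementary step) are exactly the standard route; the technical caveats you flag (extending partial isometries to unitaries when $\rho_\phi$ is rank-deficient, checking positivity and completeness of the two-outcome POVM, and padding Schmidt spectra of unequal rank with zeros) are precisely the points handled in the original proof, so no genuine gap remains beyond writing those details out.
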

This theorem is significant because any entangled state can be transformed via LOCC into a state that is less or equally entangled. Thus this theorem states that $|\psi\rangle$ is at least as entangled as $|\phi\rangle $ if and only if the eigenvalues of  $\rho_{\psi}$ are majorized by the eigenvalues of  $\rho_{\phi}$.

Because majorization is not a total order, there are many vectors that are \emph{incomparable} in that $x\not\prec y$ and $y\not\prec x$.
Jonathan and Plenio \cite{JoPl99} introduced the more general notion of \emph{trumping}, which  extends the number of comparable vectors. The idea being that, given two incomparable vectors $x$ and $y$, there sometimes exists a \emph{catalyst} $c$  (a real vector with positive components) such that $x\otimes c$ is majorized by $y\otimes c$.

A flurry of activity has ensued in order to build up the mathematical framework for trumping. In \cite{DaKl01}, the authors show that the dimension of the required catalyst is in general unbounded. They do this through careful manipulation of inequalities of summations of the components of the vectors in question. In \cite{Tur07},  Turgut  links trumping with $\ell^p$ inequalities and  the von Neumann entropy by manipulating differences of characteristic functions of the form $\sum_i(t-x_i)_+\equiv\sum_i\max\{(t-x_i), 0\}$ and making small variations to the components of the vectors, which do not affect the trumping relation.
In \cite{Kli2004,Kli2007}, Klimesh proves a result equivalent to that of Turgut using mathematical analysis.  In \cite{AuNe08}, Aubrun and Nechita describe the closure of the set of vectors trumped by a given vector $y$ via $\ell_p$ norms and describe trumping when considering infinite-dimensional catalysts. Klimesh, Turgut, and Aubrun and Nechita use completely different techniques for their proofs, exemplifying the fact that quantum information theory is multi-disciplined in nature, and problems can be solved using tools from many different fields of study.

Our contribution to the theory of trumping herein is to characterize trumping via the use of general Dirichlet polynomials, Mellin transforms, and completely monotone functions, thus approaching the subject from yet another point of view. Moreover, we generalize Turgut's result on trumping  \cite{Tur07} to convex sequences of higher order. The proof is simplified considerably by the use of the aforementioned mathematical concepts. 

The paper is organized as follows: in section \ref{sec:defnsnotn}, we recall the equivalent definitions of majorization and  the definition of trumping, which can be thought of as a generalized version of majorization, as well as the key theorem by Turgut \cite{Tur07} mentioned above. The remainder of section \ref{sec:defnsnotn} is spent reviewing Dirichlet polynomials, Mellin transforms, and completely monotone functions, which will all be used in subsequent sections.

In section \ref{sec:Connection}, we introduce an alternate definition of majorization based on linking the combined results from \cite{Mer05, Gav05} with the definition presented in \cite{KaNo63}. This connection appears to be new, and explicitly links majorization, a tool used in matrix analysis, to general Dirichlet polynomials, which arise in analytic number theory. We also show how this new definition of majorization  can be used to obtain a reinterpreted version of Turgut's theorem.

Finally, in section \ref{sec:r-convex}, we consider a generalization \cite{Nie05} of the work in  \cite{Mer05, Gav05} to convex sequences of higher order. Using the machinery of convex sequences and functions of higher order, we then present our main result: a generalization of Turgut's theorem. Our theorem reduces to that of Turgut's for $r=2$.

\section{Preliminary Definitions and Theorems}\label{sec:defnsnotn}

\subsection{Majorization}\label{sec:maj}
Let $x=(x_{1},x_{2},...,x_{d})\in \mathbb{R}^d$ and let
 $x^{\downarrow}=(x^{\downarrow}_{1},x^{\downarrow}_{2},...,x^{\downarrow}_{d})\in \mathbb{R}^d$  be the vector consisting of the elements of $x$ reordered so that \[x^{\downarrow}_{1}\ge x^{\downarrow}_{2}\ge x^{\downarrow}_{3}\ge...\ge x^{\downarrow}_{d}.\]

\begin{definition}\label{defn:maj}  Let $x=(x_{1},x_{2},...,x_{d}), y=(y_{1},y_{2},...,y_{d})\in \mathbb{R}^d$.
We say that $x$ is majorized  by $y$, written $x\prec y$,   if
\begin{eqnarray*}
\sum_{j=1}^{k}x^{\downarrow}_{j}\leq \sum_{j=1}^{k}y^{\downarrow}_{j}\quad 1\leq k \leq d,
\end{eqnarray*}
with equality when $k=d$.
\end{definition}

 We note the following well-known equivalent condition for majorization: $x\prec y$ if and only if  $\sum_{i=1}^d \Psi(x_i) \leq \sum_{i=1}^d \Psi(y_i)$, for every convex function $\Psi:\mathbb{R}\to \mathbb{R}$.

Now suppose $x_i,y_i\in \mathbb{N}$ for all $i:1\le i\le d$.  Let
\begin{eqnarray}\label{eq:a_n}
a_n=\# \{ i:y_i=n\} -\# \{ i:x_i=n\} \quad \forall n\in \mathbb{N}.
\end{eqnarray}
Then $\sum_{n=1}^{\infty} a_n\Psi(n)= \sum_{i=1}^d \Psi(y_i)- \sum_{i=1}^d \Psi(x_i)$ and hence $x\prec y$ if and only if $\sum_{n=1}^{\infty} a_n\Psi(n)\geq 0$ for all convex functions $\Psi:\mathbb{R}\to \mathbb{R}$ or equivalently $\sum_{n=1}^{\infty} a_n\phi(n)\geq 0$ for all convex sequences $\{ \phi(n)\}_{n=1}^{\infty} $.  This is a classical description of majorization.   We now consider general $d$-tuples of real numbers $\{ a_n\}$ for which $\sum_{n=1}^{\infty} a_n\Psi(n)\geq 0$ and do not restrict ourselves only to the integer values of $\{a_n\}$ which arise from equation \ref{eq:a_n}.

The following characterization of such sequences $\{ a_n\}$ was given in \cite{Gav05}, which built on the work of \cite{Mer05}.

\begin{theorem}\label{thm:Gav}
Let $a_0, \dots, a_d$ be $d+1$ fixed real numbers such that $\sum_{i=0}^da_i^2>0$. The inequality
\[
\sum_{i=0}^da_i\phi(i)\geq 0
\]
holds for every convex sequence $\{\phi(i)\}$ if and only if the polynomial $\sum_{i=0}^da_it^i$ has a double root at $t=1$ and all the coefficients of the polynomial
\begin{eqnarray}\label{eq:non-negcoeffs}
\frac{\sum_{i=0}^da_it^i}{(t-1)^2}=\sum_{i=0}^{d-2}c_it^i
\end{eqnarray}
are nonnegative.
\end{theorem}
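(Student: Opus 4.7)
The plan is to transform the sum $\sum_{i=0}^d a_i \phi(i)$ into one featuring the second forward differences $\Delta^2 \phi(j) := \phi(j+2) - 2\phi(j+1) + \phi(j)$, since convexity of the sequence $\{\phi(i)\}$ is by definition equivalent to $\Delta^2 \phi(j) \geq 0$ for every $j$. Letting $A(t) := \sum_{i=0}^d a_i t^i$, the whole proof is driven by the algebraic coincidence that $(t-1)^2 = t^2 - 2t + 1$ is exactly the ``symbol'' of the $\Delta^2$ operator, which makes the polynomial divisibility statement and the convexity statement two sides of the same coin.

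For the necessity of the double root at $t=1$, I would test the hypothesis on the linear sequences $\phi(i) = 1$ and $\phi(i) = i$; both are convex (with equality throughout in the convexity condition), and applying the hypothesis to $\pm \phi$ forces $\sum a_i = 0$ and $\sum i a_i = 0$, i.e., $A(1) = A'(1) = 0$. Thus $(t-1)^2$ divides $A(t)$ and one may write $A(t) = (t-1)^2 C(t)$ with $C(t) = \sum_{j=0}^{d-2} c_j t^j$. Matching coefficients in $(t^2 - 2t + 1) C(t)$ gives $a_i = c_{i-2} - 2 c_{i-1} + c_i$ (with the convention $c_j = 0$ for $j \notin \{0,\dots,d-2\}$). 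Substituting and reindexing yields the central identity
\[
\sum_{i=0}^d a_i \phi(i) \;=\; \sum_{j=0}^{d-2} c_j \, \Delta^2 \phi(j).
\]
To extract each $c_j \geq 0$, I would test the hypothesis on the convex sequences $\phi_k(i) := (i-k)_+$ for $1 \leq k \leq d-1$; a short calculation shows $\Delta^2 \phi_k(j) = \delta_{j, k-1}$, so the identity collapses to $c_{k-1} = \sum_{i=0}^d a_i (i-k)_+ \geq 0$, picking off every coefficient $c_0, \ldots, c_{d-2}$.

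For sufficiency the same identity gives the inequality immediately, since both factors on the right-hand side are non-negative. The only step requiring any care is the derivation of the identity itself (essentially Abel summation by parts applied twice), and the point worth double-checking is that no boundary terms survive --- this is automatic because $A(t) = (t-1)^2 C(t)$ is an exact polynomial factorization with no remainder, so the coefficient identity $a_i = c_{i-2} - 2 c_{i-1} + c_i$ holds uniformly for all $i \in \{0, 1, \ldots, d\}$. Beyond that bookkeeping, the argument is quite mechanical; the real conceptual input is recognizing that the double-root condition at $t=1$ is what removes the ``linear part'' of $\phi$ from the sum, and that the test functions $(i-k)_+$ form a complete enough family of extremal convex sequences to detect every $c_j$.
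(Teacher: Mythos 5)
Your proof is correct, and it takes a genuinely more self-contained route than the paper does. The paper never proves Theorem \ref{thm:Gav} directly: it quotes the result from Gavrea and then argues only that it is a reformulation of the Karlin--Novikoff lemma (Lemma \ref{lemma:KaNo63}), by observing that conditions (i) and (ii) of that lemma say exactly that $\sum_i a_i t^i$ has a double root at $t=1$, and that the telescoping computation $\sum_{j=0}^k\sum_{i=0}^j a_i = c_k$ identifies condition (iii) with nonnegativity of the $c_i$; the analytic content (why these conditions characterize positivity on convex sequences) is left to the cited lemma. You instead prove the whole statement from scratch: the coefficient identity $a_i = c_{i-2}-2c_{i-1}+c_i$ yields the summation-by-parts identity $\sum_i a_i\phi(i)=\sum_j c_j\,\Delta^2\phi(j)$, sufficiency is then immediate, and necessity follows by testing against the extremal convex sequences $\pm 1$, $\pm i$, and $(i-k)_+$ (whose second differences are Kronecker deltas, so they pick off each $c_{k-1}$ individually). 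In effect your argument also proves the discrete Karlin--Novikoff lemma that the paper relies on, since your $c_k$ coincide with the double cumulative sums appearing there; what you gain is a short, elementary, fully self-contained proof, while the paper gains brevity by outsourcing the key step to the literature. The only cosmetic remark is that the hypothesis $\sum_i a_i^2>0$ plays no role in your argument beyond excluding the zero polynomial, for which the phrase ``double root at $t=1$'' would be vacuous; it is worth one sentence to say so.
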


As a simple example, consider the vectors $x=(5, 5, 5, 5), y=(2, 2, 6, 10)$. It can easily be verified that $x\prec y$. Consider $n\in\N$ for which $x_i=n$ or $y_i=n$ and let $a_n$ be defined as in equation (\ref{eq:a_n}). We have
\[
\sum_{\{n\,|\, x_i=n \textnormal{ or } y_i=n\}} a_nt^n=\sum_it^{y_i}-\sum_it^{x_i}=t^{10}+t^6-4t^5+2t^2.
\]
We can check that this polynomial has a double root at $t=1$ by dividing by $t^2-2t+1$ to obtain the polynomial $t^8+2t^7+3t^6+4t^5+6t^4+4t^3+2t^2$, which has no further roots at $t=1$. Furthermore, we can clearly see all the coefficients of this new polynomial are positive. By theorem \ref{thm:Gav}, we have $\sum_{0\leq i\leq m}a_i\phi(i)\geq 0$ for every convex sequence $\{\phi(i)\}$ and hence $x\prec y$.

We note that theorem \ref{thm:Gav} can be seen to be a reformulation of the following earlier result.

\begin{lemma}\cite{KaNo63} \label{lemma:KaNo63}
The inequality $\sum_{0\leq i\leq m} a_i\phi(i)\geq 0$ for all convex functions $\phi$  is equivalent to the following three conditions.
\begin{enumerate}
\item[(i)] $\sum_{0\leq i\leq m}a_i=0$;
\item[(ii)] $\sum_{0\leq i\leq m}ia_i=0$;
\item[(iii)] $\sum_{j=0}^k\sum_{i=0}^ja_i\geq 0 \quad 0\leq k\leq m$.
\end{enumerate}
\end{lemma}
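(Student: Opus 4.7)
The plan is to reduce the inequality $\sum_{i=0}^m a_i\phi(i)\ge 0$ to a nonnegative-linear-combination of the second differences $\Delta^2\phi(k) := \phi(k+2)-2\phi(k+1)+\phi(k)$, which are automatically nonnegative for convex $\phi$. To do so, I would apply Abel summation (summation by parts) twice. Writing $A_j=\sum_{i=0}^j a_i$ and $B_k=\sum_{j=0}^k A_j$, the first Abel step gives
\[
\sum_{i=0}^m a_i\phi(i)=A_m\phi(m)-\sum_{j=0}^{m-1}A_j\bigl(\phi(j+1)-\phi(j)\bigr),
\]
and a second Abel step on the remaining sum yields the working identity
\[
\sum_{i=0}^m a_i\phi(i)=A_m\phi(m)-B_{m-1}\bigl(\phi(m)-\phi(m-1)\bigr)+\sum_{k=0}^{m-2}B_k\,\Delta^2\phi(k).
\]

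For the direction $(\mathrm{i})$--$(\mathrm{iii})\Rightarrow \sum a_i\phi(i)\ge 0$, I would first note the arithmetic identity $B_m=(m+1)A_m-\sum_{i=0}^m i a_i$, which together with (i) and (ii) forces $B_m=0$, and hence $B_{m-1}=B_m-A_m=0$. The boundary terms in the working identity then vanish, leaving
\[
\sum_{i=0}^m a_i\phi(i)=\sum_{k=0}^{m-2}B_k\,\Delta^2\phi(k),
\]
which is nonnegative by (iii) and convexity of $\phi$.

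For the converse, I would extract the three conditions by testing against a well-chosen family of convex functions. Taking $\phi(x)=\pm 1$ gives (i); taking $\phi(x)=\pm x$ gives (ii). To obtain (iii), I would use the hockey-stick functions $\phi_k(x)=(x-k-1)_+$ for $0\le k\le m-2$: a direct check at integer arguments shows $\Delta^2\phi_k(j)=\delta_{j,k}$, so once (i) and (ii) are in hand the identity collapses to $\sum_{i=0}^m a_i\phi_k(i)=B_k$, and the hypothesis forces $B_k\ge 0$. The remaining cases $k=m-1,m$ of (iii) are automatic since $B_{m-1}=B_m=0$, as noted above.

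I do not anticipate a serious obstacle: the bookkeeping in the two Abel summations is the most delicate part, and the chief thing to verify carefully is that the specific hockey-stick function $\phi_k$ indeed isolates the single coefficient $B_k$. One small subtlety worth flagging is whether ``convex functions'' in the lemma means convex sequences or convex real-valued functions; since every convex sequence on $\{0,\ldots,m\}$ extends to a piecewise-linear convex function on $\mathbb{R}$ and the test functions above are themselves convex on $\mathbb{R}$, both interpretations lead to the same characterization.
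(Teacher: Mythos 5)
Your proof is correct: the double Abel summation identity checks out (with $A_m\phi(m)-B_{m-1}(\phi(m)-\phi(m-1))$ as boundary terms), the identity $B_m=(m+1)A_m-\sum_i ia_i$ does force $B_m=B_{m-1}=0$ under (i) and (ii), and the hockey-stick functions $(x-k-1)_+$ indeed satisfy $\Delta^2\phi_k(j)=\delta_{j,k}$ at integer nodes, so they isolate $B_k$ in the converse. Your route differs from the paper's, though: the paper does not prove this lemma directly at all --- it cites Karlin--Novikoff, observes that the statement is a reformulation of Gavrea's polynomial criterion (double root of $\sum a_it^i$ at $t=1$ and nonnegative quotient coefficients), and obtains it as the special case of the continuous lemma for signed measures by taking $\mu$ to be a sum of point masses $a_i$ at the integers; the continuous result in turn is proved (in its $r$-convex generalization later in the paper) by testing against $\pm x^k$ and integrating by parts $r$ times. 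Your argument is the exact discrete analogue --- Abel summation playing the role of integration by parts --- but it is self-contained and, in the converse direction, more explicit than the paper's treatment: where the paper's continuous proof asserts that nonnegativity of the integral against all convex $\phi$ forces $\mu_2\geq 0$ without exhibiting test functions, your truncated-linear functions make that step completely rigorous in the discrete setting. What the paper's framing buys instead is uniformity: the measure-theoretic lemma covers non-integer support and feeds directly into the Dirichlet-polynomial/Mellin-transform machinery used later, whereas your argument is tailored to the integer-indexed case. Your closing remark on convex sequences versus convex functions is also accurate, since piecewise-linear interpolation of a convex sequence is convex and your test functions are convex on all of $\mathbb{R}$.
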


To see that theorem \ref{thm:Gav} is equivalent to lemma  \ref{lemma:KaNo63}, we first note that condition (i) of the lemma is equivalent to the polynomial $\sum_ia_it^i$ having a root at $t=1$. Condition (ii) of the lemma is precisely the derivative of the polynomial $\sum_ia_it^i$ evaluated at $t=1$, so conditions (i) and (ii) are equivalent to the polynomial  $\sum_ia_it^i$ having a double root at $t=1$. Finally, re-writing equation (\ref{eq:non-negcoeffs}) in terms of the $a_i$, we find $a_i=c_i-2c_{i-1}+c_{i-2}$, where we let $c_{-1}=c_{-2}=0$. We then find  $\sum_{j=0}^k\sum_{i=0}^ja_i=c_0+(c_1-c_0)+(c_2-c_1)+(c_3-c_2)+\cdots+(c_k-c_{k-1})=c_k$. Thus condition (iii) of the lemma is precisely the statement that all $c_i$ are non-negative.

Lemma  \ref{lemma:KaNo63} is itself a special case of a more general result from the same paper.

\begin{lemma}\cite{KaNo63} \label{lemma:KaNo63Cont} Let $\mu$ be a signed measure on $[a,b]$. Let $\mu_1(x)=\int_a^x\,d\mu$ and $\mu_2(x)=\int_a^x\mu_1(t)\,dt$.
Then the inequality $\int_a^b\phi\, d\mu\geq 0$ is satisfied for all convex functions $\phi$  defined on $[a, b]$ if and only if the following three conditions hold:
\begin{enumerate}
\item[(i)] $\int_a^b\,d\mu=0$;
\item[(ii)] $\int_a^bx\,d\mu=0$;
\item[(iii)] $\mu_2(x)\geq 0\quad \forall x\in [a,b]$.
\end{enumerate}
\end{lemma}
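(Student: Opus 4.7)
The plan is to prove both directions by combining two integrations by parts (against $\mu_1$ and $\mu_2$) with a short list of test functions. For the necessity direction, I would test the hypothesis $\int_a^b \phi\,d\mu \geq 0$ against the affine (hence convex) functions $\phi \equiv \pm 1$ and $\phi(x) = \pm x$, which immediately yield (i) and (ii). For (iii), I would fix $c \in [a,b]$ and plug in the convex hinge $\phi_c(x) = (c-x)_+$; one integration by parts, using $\mu_1(a) = 0$, gives
\[
\int_a^b (c-x)_+\,d\mu(x) = \int_a^c (c-x)\,d\mu_1(x) = \int_a^c \mu_1(t)\,dt = \mu_2(c),
\]
so the hypothesis forces $\mu_2(c) \geq 0$.

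For sufficiency, I would first combine (i) and (ii) via a single integration by parts to deduce $\mu_2(b) = b\mu_1(b) - \int_a^b t\,d\mu(t) = 0$; together with the obvious $\mu_2(a) = 0$, this shows that $\mu_2$ vanishes at both endpoints of $[a,b]$. For $\phi \in C^2$ convex, I would then integrate by parts twice:
\[
\int_a^b \phi\,d\mu = [\phi\mu_1]_a^b - \int_a^b \phi'\,d\mu_2 = -[\phi'\mu_2]_a^b + \int_a^b \phi''\mu_2\,dx = \int_a^b \phi''(x)\mu_2(x)\,dx \geq 0,
\]
where the four boundary contributions vanish by (i), (ii), and $\mu_2(a) = \mu_2(b) = 0$, and the final integrand is nonnegative by convexity of $\phi$ and condition (iii).

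To handle a \emph{general} (not necessarily smooth) convex $\phi$ on $[a,b]$, I would approximate $\phi$ uniformly by $C^2$ convex functions --- for instance by convolving with a smooth nonnegative bump, a process that preserves convexity --- and pass to the limit, using the finiteness of the total variation $|\mu|([a,b])$. The step I expect to be the main obstacle is precisely this approximation when $\phi$ has $\phi'_+(a) = -\infty$ or $\phi'_-(b) = +\infty$, since the mollification must be combined with a truncation of the interval to $[a+\varepsilon,b-\varepsilon]$, and one must exploit the continuity of $\mu_2$ and its vanishing at the endpoints to control the resulting error as $\varepsilon \to 0$. A cleaner alternative would be to invoke the Choquet-type representation $\phi(x) = \phi(a) + \phi'_+(a)(x-a) + \int_{(a,b)} (x-t)_+\,d\nu(t)$ with $\nu \geq 0$, apply the hinge calculation from the necessity step to each $(x-t)_+$, and use Fubini to obtain $\int\phi\,d\mu = \int_{(a,b)} \mu_2(t)\,d\nu(t) \geq 0$ directly, avoiding smoothness altogether.
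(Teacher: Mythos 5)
Your proof is correct, and it follows essentially the same route as the paper: the paper itself cites Karlin--Novikoff for this lemma and only writes out the argument for its $r$-convex generalization (Lemma 4.3), which is exactly your strategy of testing against $\pm 1$, $\pm x$ (more generally $\pm x^k$), deducing $\mu_k(b)=0$, and integrating by parts to reduce to the sign of $\phi''\mu_2$. If anything, your version is more complete than the paper's sketch, since you make the necessity of (iii) explicit via the hinge functions $(c-x)_+$ and you address the passage from $C^2$ convex $\phi$ to general convex $\phi$ (mollification or the Choquet-type hinge representation), points the paper leaves implicit.
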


By considering any measure $\mu$ with support strictly on some set of integers $\{i\}$ having associated point masses $a_i$, we then obtain the discrete analogue---lemma \ref{lemma:KaNo63} considered above.  Lemma \ref{lemma:KaNo63Cont} can also be used to characterize finite sequences $\{ a_i\}_{i\in I}$ where $I$ is a finite set of real numbers (and not necessarily integers) such that $\sum_{i\in I}a_i\phi(i)\geq 0$ for all convex functions $I$ by considering the signed measure $\mu$ supported on $I$ with $\mu(i)=a_i$.

\subsection{Trumping}\label{sec:trumping}

\begin{definition}
Let $x, y \in \mathbb{R}^d$ of non-negative components. We say that $x$ is \emph{trumped} by $y$, written $x\prec_T y$, if there exists a vector $c\in \R^n$ with positive components such that $x\otimes c\prec y\otimes c$. 
\end{definition}
The vector $c$ is often called a \emph{catalyst}.

Let us define $\sigma(x)=-\sum_{i=1}^dx_i\log x_i$, which we recall is the formula for the von Neumann entropy of a state with eigenvalues $x_i$. von Neumann entropy measures the amount of uncertainty we have regarding the physical state of a quantum  system. Let us also define $A_\nu(x)=\left(\frac1d\sum_{i=1}^dx_i^\nu\right)^{\frac1\nu}$ for real numbers $\nu\neq 0$ and $A_0(x)=\left(\prod_{i=1}^d x_i\right)^\frac{1}{d}$. In \cite{Tur07}, Turgut established the following result.

\begin{theorem}\cite{Tur07} \label{turg}
For two real $d$-dimensional vectors $x$ and $y$ with non-negative components such that $x$ has non-zero elements and the vectors are distinct (i.e.\ $x^{\downarrow}\neq y^{\downarrow}$), the relation $x\prec_T y$ is equivalent to the following three strict inequalities:
\begin{enumerate}
\item[(T1)] $A_\nu(x)>A_\nu(y),\quad \forall \nu\in (-\infty, 1)$,\label{T1}
\item[(T2)] $A_\nu(x)<A_\nu(y),\quad \forall \nu\in (1,\infty)$,\label{T2}
\item[(T3)] $\sigma(x)>\sigma(y)$.
\end{enumerate}
\end{theorem}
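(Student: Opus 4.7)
My plan is to prove the two directions of the equivalence separately.

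For \emph{necessity}, I would assume $x\prec_T y$ with some catalyst $c$. The convex-function characterization of majorization recalled in Section~\ref{sec:maj}, applied to $x\otimes c\prec y\otimes c$, gives $\sum_{i,j}\Psi(x_i c_j)\leq\sum_{i,j}\Psi(y_i c_j)$ for every convex $\Psi$. Testing with the multiplicatively-separable choices $\Psi_\nu(t)=t^\nu$ (for $\nu<0$ or $\nu\geq 1$), $\Psi_\nu(t)=-t^\nu$ (for $0<\nu<1$), and $\Psi(t)=t\log t$, each double sum factors as $\bigl(\sum_j c_j^\nu\bigr)\bigl(\sum_i\Psi_\nu(x_i)\bigr)$ or the analogous expression in $y$. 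The strictly positive catalyst factor cancels, yielding the non-strict versions of (T1)--(T3). Strictness is then forced by the hypothesis $x^\downarrow\neq y^\downarrow$ together with the strict convexity of the test functions, which rules out equality in any single relation.

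For \emph{sufficiency}, I would translate (T1)--(T3) into analytic data on the generalized Dirichlet polynomial
\[
D(s):=\sum_{i=1}^{d}y_i^s-\sum_{i=1}^{d}x_i^s.
\]
Continuity of $A_\nu$ at $\nu=1$ and the triviality at $\nu=0$ give $D(0)=D(1)=0$; (T3) gives $D'(1)>0$; and (T1)--(T2) force $D<0$ on $(0,1)$ and $D>0$ on $(-\infty,0)\cup(1,\infty)$. Writing $C(s):=\sum_j c_j^s$, the product $C(s)D(s)$ is precisely the Dirichlet polynomial associated to the tensored pair $y\otimes c$, $x\otimes c$. Using the Dirichlet-polynomial reformulation of majorization developed in Section~\ref{sec:Connection}---the real-exponent analogue of Theorem~\ref{thm:Gav}, underpinned by the continuous Karlin--Novikoff criterion of Lemma~\ref{lemma:KaNo63Cont}---the relation $x\otimes c\prec y\otimes c$ becomes a positivity condition on a doubly integrated version of $C(s)D(s)$, or equivalently, by Bernstein's theorem, complete monotonicity of a function naturally built from $C\cdot D$.

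The crux is then to exhibit a catalyst $c$ (equivalently a positive Mellin weight $C$) for which this positivity holds. The plan is to choose $C(s)$ as the Mellin transform of a positive measure concentrated near a judiciously selected real parameter, so that multiplication by $C$ absorbs the non-monotone contributions of $D$ and leaves a completely monotone remainder; a density argument then replaces the continuous measure by a finite-dimensional $c$. \emph{The main obstacle} is exactly this construction: (T1)--(T3) record only pointwise sign information on $D$ along the real axis, while complete monotonicity is a much stronger global analytic requirement, so the engineering of $C$---and the verification that the resulting positive measure can be taken to be finitely supported---is the technical heart of the argument. The strictness of (T1)--(T3) is what supplies the analytic slack needed to make such a choice feasible.
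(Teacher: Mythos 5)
Your framework for the sufficiency direction is in fact the same one the paper uses (recast trumping as a statement about a generalized Dirichlet polynomial, divide out the simple zeros at the two special points, and ask for a nonnegative-coefficient multiplier making the quotient completely monotone, via Bernstein/Karlin--Novikoff as in theorem \ref{lemma:KaNo63D} and observation \ref{Obs:Trump}), but the proposal stops exactly where the theorem actually lives. You yourself flag the construction of the catalyst factor as ``the main obstacle'' and offer only a plan: take $C(s)$ to be the Mellin transform of a positive measure concentrated near a judiciously chosen point. That device cannot work as described: multiplying $D$ by a sharply concentrated positive weight essentially reweights/translates it and gives no mechanism for turning mere pointwise positivity of $D(s)/\bigl(s(s-1)\bigr)$ on the real line into the infinitely many derivative sign conditions of complete monotonicity. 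The missing ingredient is P\'{o}lya's positivity theorem (lemma \ref{lem:Pol}): a real polynomial positive on $(0,\infty)$ admits a polynomial multiplier with nonnegative coefficients whose product has nonnegative coefficients. The paper's proof of theorem \ref{thm:mainresult} (of which Turgut's theorem is the $r=2$ case) first reduces to commensurable exponents $\lambda_n=n\alpha$, writes $\zeta(s)=p(e^{-\alpha s})$, factors out the simple positive roots, applies P\'{o}lya to the remaining strictly positive polynomial factor to manufacture $\zeta_2(s)=g(e^{-\alpha s})$, and then verifies complete monotonicity of $\zeta\zeta_2/\prod_k(s+k)$ by exhibiting it as a product of elementary completely monotone functions (using lemma \ref{lem:semi-g}); the incommensurable case follows by an approximation argument as in Turgut. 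Without this (or an equivalent) construction, the hard direction is unproven.

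Two smaller points. First, your necessity argument is essentially right in outline, but the strictness step is asserted, not proved: you need that equality in $\sum_{i,j}\Psi(x_ic_j)=\sum_{i,j}\Psi(y_ic_j)$ for a strictly convex $\Psi$ under $x\otimes c\prec y\otimes c$ forces $(x\otimes c)^{\downarrow}=(y\otimes c)^{\downarrow}$, and then that the tensor factor $c$ can be cancelled to conclude $x^{\downarrow}=y^{\downarrow}$; the case where $y$ has zero components also needs separate handling for the test functions with $\nu\leq 0$. Second, note that the paper itself does not reprove Turgut's theorem from scratch: it cites \cite{Tur07} and recovers the statement as the $r=2$ corollary of theorem \ref{thm:mainresult}, so the expected proof here is precisely the P\'{o}lya-multiplier argument your proposal leaves open.
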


In \cite{Kli2004,Kli2007}, Klimesh stated a similar equivalent condition for trumping. For a $d$-dimensional probability vector $x$, let
\[
  f_r(x) =
  \begin{cases}
    \ln \sum_{i=1}^d x_i^r & (r>1); \\
    \sum_{i=1}^d x_i \ln x_i & (r = 1); \\
    -\ln \sum_{i=1}^d x_i^r & (0<r<1); \\
    -\sum_{i=1}^d \ln x_i & (r = 0); \\
    \ln \sum_{i=1}^d x_i^r & (r<0).
  \end{cases}
\]

If any of the components of $x$ are
$0$, we take $f_r(x) = \infty$ for $r \leq 0$.

\begin{theorem}(\cite{Kli2004,Kli2007})\label{thm:Klimesh}
 Let $x=(x_1,\ldots,x_d)$ and $y=(y_1,\ldots,y_d)$ be $d$-dimensional
 probability vectors.
 Suppose that $x$ and $y$ do not both contain components equal to $0$
 and that $x^{\downarrow} \neq y^{\downarrow}$.  Then $x \prec_T y$ if and only if
 $f_r(x) < f_r(y)$ for all real numbers $r$.
\end{theorem}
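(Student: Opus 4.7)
The plan is to derive Klimesh's theorem directly from Turgut's theorem (Theorem \ref{turg}). The key observation is that, under the stated non-vanishing hypotheses, the family of comparisons $\{f_r(x) < f_r(y)\}_{r \in \mathbb{R}}$ is, case by case across the five clauses defining $f_r$, nothing more than a repackaging of Turgut's three conditions (T1), (T2), and (T3) using logarithms. Once this piecewise matching is verified, the equivalence is immediate.

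First I would match each range of $r$ with the corresponding Turgut condition. For $r > 1$, exponentiating gives $f_r(x) < f_r(y) \iff \sum_i x_i^r < \sum_i y_i^r$, and since $t \mapsto t^{1/r}$ is strictly increasing on $(0,\infty)$, this is equivalent to $A_r(x) < A_r(y)$, which is exactly (T2). For $0 < r < 1$, the same argument applied with the sign flip from $-\ln$ yields $A_r(x) > A_r(y)$, giving (T1). For $r < 0$, $f_r(x) < f_r(y)$ again reduces to a power-sum comparison, but now $1/r < 0$ reverses the implied inequality when passing to $A_r$, so one again obtains $A_r(x) > A_r(y)$, matching (T1). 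For $r = 0$, $f_0(x) < f_0(y)$ rewrites as $\prod_i x_i > \prod_i y_i$, i.e.\ $A_0(x) > A_0(y)$, still (T1). Finally, $r = 1$ unpacks directly to $\sigma(x) > \sigma(y)$, which is (T3). Letting $r$ range over all of $\mathbb{R}$ therefore reproduces exactly (T1) on $\nu \in (-\infty, 1)$, (T2) on $\nu \in (1, \infty)$, and (T3); Theorem \ref{turg} then gives the desired equivalence with $x \prec_T y$.

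The main obstacle is bookkeeping rather than conceptual. The sign flip in the $r < 0$ case (because $t \mapsto t^{1/r}$ is decreasing when $r < 0$) must be tracked carefully to avoid inverting the direction of (T1). A second delicate point is the zero-component convention: the theorem declares $f_r(x) = \infty$ for $r \le 0$ when $x$ has a zero entry, and the assumption that $x$ and $y$ do not both have zero components is precisely what prevents a meaningless $\infty < \infty$ comparison and reconciles Klimesh's formulation with the tacit non-vanishing hypothesis in Turgut's version. Modulo these bookkeeping issues, no new analytic input is required beyond Theorem \ref{turg}.
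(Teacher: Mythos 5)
Two framing points first: the paper does not actually prove this theorem---it is quoted from Klimesh's papers, and the only related content in the text is the one-line remark that ``Klimesh and Turgut's conditions are easily seen to be equivalent.'' Your proposal is exactly that remark made explicit: deduce Klimesh's statement from Theorem \ref{turg} by matching $f_r(x)<f_r(y)$, range by range in $r$, with (T1), (T2), (T3). For strictly positive vectors your bookkeeping is correct, including the sign reversal for $r<0$ coming from $t\mapsto t^{1/r}$ being decreasing, so the core of the argument is fine and is the intended route.

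There is, however, a genuine gap in how you dispose of the zero-component cases. Klimesh's hypothesis (``$x$ and $y$ do not \emph{both} contain components equal to $0$'') is strictly weaker than Turgut's hypothesis that every entry of $x$ is non-zero, so the case where $x$ has a zero entry but $y$ does not is allowed by the statement you are proving and is simply not covered by Theorem \ref{turg}; your claim that the hypothesis ``reconciles Klimesh's formulation with the tacit non-vanishing hypothesis in Turgut's version'' is not accurate. You need a short separate argument there: if $x_i=0$ for some $i$ while $y$ is strictly positive, then $f_r(x)=\infty\not<f_r(y)$ for $r\le 0$, so the right-hand side fails, and also $x\not\prec_T y$, because $u\prec v$ forces $\prod_k u_k\ge \prod_k v_k$ (Schur-concavity of the product), so $x\otimes c\prec y\otimes c$ with $c>0$ would give $0\ge\prod_i y_i^{\,n}\prod_j c_j^{\,d}>0$, a contradiction; hence both sides of the equivalence are false and the theorem holds in this case. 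Symmetrically, when $y$ has zero entries and $x$ does not, Turgut's theorem does apply, but you must fix the convention $A_\nu(y)=0$ for $\nu\le 0$ (since $\sum_i y_i^{\nu}=\infty$ and $1/\nu<0$, and $A_0(y)=0$) so that (T1) on $\nu\le 0$ is read as automatically satisfied, matching the automatically true comparisons $f_r(x)<\infty=f_r(y)$ for $r\le 0$. With these two edge cases written out, your reduction is complete; note also that the hypothesis excluding simultaneous zeros is genuinely needed (e.g.\ $x=(\tfrac12,\tfrac12,0)\prec y=(1,0,0)$ gives $x\prec_T y$ while $f_0(x)<f_0(y)$ fails), so it cannot be waved away as mere bookkeeping.
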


Klimesh and Turgut's conditions are easily  seen to be equivalent.

\subsection{Dirichlet Polynomials}
\begin{definition}
A \emph{general Dirichlet polynomial} is a polynomial of the form
\[
    \sum_{n=1}^{k}a_n e^{-\lambda_n s},
\]
where, herein, we take $a_n, s\in \R$ (in general, they can be complex) and $\{\lambda_n\}$ is a strictly increasing sequence of positive numbers that tends to infinity.

We obtain a \emph{Dirichlet polynomial}
\[
    \sum_{n=1}^{k}\frac{a_n}{n^s},
\]
via $ \lambda_n=\log n$.

Our results make use of general Dirichlet polynomials; we do not address the case when the number of summands is infinite.
\end{definition}

\subsection{Completely Monotone Functions}

\begin{definition}
Let $I$ be a real interval.  A function $f$ is said to be \emph{completely monotone} on $I$ if $(-1)^nf^{(n)}(x)\geq 0$ for all $x\in I$ and all $n=0, 1, 2, \dots$.
\end{definition}

The following elementary observation will be useful to us later on:

\begin{lemma} \label{easy} Any non-zero entire function that is completely monotone on $(0,\infty)$ must be strictly positive on $\mathbb{R}$. \end{lemma}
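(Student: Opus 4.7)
The plan is to first establish strict positivity on $(0,\infty)$ by a monotonicity and identity-theorem argument, and then to propagate positivity to $(-\infty,0]$ via the Taylor expansion of $f$ about a positive base point.

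For the first step, the $n=0$ case of complete monotonicity gives $f(x) \ge 0$ for $x > 0$, while the $n=1$ case gives $f'(x) \le 0$, so $f$ is non-increasing on $(0,\infty)$. If $f$ vanished at some $x_0 > 0$, non-negativity and monotonicity would force $f \equiv 0$ on $[x_0,\infty)$, and the identity theorem for entire functions would then yield $f \equiv 0$ globally, contradicting the hypothesis. Hence $f > 0$ on $(0,\infty)$.

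For the second step, I would fix any $x_0 > 0$ and expand $f$ in its Taylor series about $x_0$; since $f$ is entire this series converges for every $x \in \mathbb{R}$. For $x \le 0$, rewrite $(x - x_0)^n = (-1)^n (x_0 - x)^n$ with $x_0 - x > 0$, so
\[
f(x) \;=\; \sum_{n=0}^{\infty} \frac{(-1)^n f^{(n)}(x_0)}{n!}(x_0 - x)^n.
\]
Complete monotonicity makes every coefficient $(-1)^n f^{(n)}(x_0)$ non-negative, and each $(x_0-x)^n$ is non-negative, so every term is non-negative. The $n=0$ term equals $f(x_0)$, which is strictly positive by the first step, so $f(x) > 0$ as desired.

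The only real obstacle is the sign bookkeeping that turns a Taylor expansion in $(x-x_0)$ with alternating-sign coefficients into an expansion in the positive quantity $(x_0-x)$ with non-negative coefficients; once this observation is in place, the conclusion is immediate. Notably, no deeper machinery such as Bernstein's representation of completely monotone functions as Laplace transforms is needed, because the entireness hypothesis together with complete monotonicity on a half-line already forces both the required sign pattern and global convergence of the expansion.
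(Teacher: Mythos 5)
Your proof is correct and takes essentially the same approach as the paper: strict positivity on the positive half-line via monotonicity plus the identity theorem, followed by a power-series expansion whose coefficients $(-1)^n f^{(n)}$ are non-negative by complete monotonicity. The only difference is cosmetic --- you expand about a point $x_0>0$, whereas the paper first passes to the limit to get $(-1)^n f^{(n)}(0)\ge 0$ and uses the Maclaurin series; both handle $(-\infty,0]$ identically in spirit.
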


\begin{proof} Let $f$ be a non-zero entire function that is completely monotone on $(0,\infty)$.  Then $f$ is non-negative and non-increasing on $[0,\infty)$.  If there exists $c\in [0,\infty)$ such that $f(c)=0$ then $f$ must be zero on $[c,\infty)$; since the zero set of a non-zero entire function cannot have a limit point, this is impossible and $f$ is strictly positive on $[0,\infty)$.  Since the derivatives of $f$ are continuous of all orders,  $(-1)^nf^{(n)}(0)\geq 0$ and a MacLaurin series argument shows us that $f$ is strictly positive on $(-\infty,0)$ as well.\end{proof}

The example $f(x)=\frac{1}{1+x}$ shows us that the requirement that $f$ be entire in the above lemma cannot be removed.
Completely monotone functions are necessarily positive, decreasing, and convex. Bernstein's theorem on monotone functions characterizes completely monotone functions $f$ on $(0, \infty)$ via
\[
f(s) = \int_0^\infty e^{-st} \,d\mu(t),
\]
the Laplace transform of a positive measure.
In this way the completely monotone functions on $(0, \infty)$ are seen to be the cone generated by $e^{-st}$.

We recall that the Mellin transform of a function $f$ on $(0,\infty)$ is the function $\phi(s)=\int_{0}^{\infty}f(t)t^{s-1}dt$.  The Mellin and Laplace transforms are closely related.  One such connection is the following observation:

\begin{observation} Let $f\in L^1((0,\infty))$ be zero outside of $[0,1]$.  Then the Mellin transform of $f(x)$ is the Laplace transform of $f(e^{-x})$. \end{observation}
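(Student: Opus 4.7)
The plan is to prove this observation by a direct change of variables in the Mellin integral. First I would write out the Mellin transform of $f$ explicitly: since $f$ is supported in $[0,1]$, we have
\[
\phi(s)=\int_{0}^{\infty}f(t)t^{s-1}\,dt=\int_{0}^{1}f(t)t^{s-1}\,dt.
\]
Next I would perform the substitution $t=e^{-x}$, so that $dt=-e^{-x}\,dx$, with $t=1$ corresponding to $x=0$ and $t\to 0^{+}$ corresponding to $x\to\infty$. Under this change of variables $t^{s-1}=e^{-(s-1)x}$, and combining with the $e^{-x}\,dx$ coming from $dt$ (after reversing the limits of integration to absorb the minus sign) yields the factor $e^{-sx}$.

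Carrying this out gives
\[
\phi(s)=\int_{0}^{\infty}f(e^{-x})\,e^{-(s-1)x}\,e^{-x}\,dx=\int_{0}^{\infty}f(e^{-x})\,e^{-sx}\,dx,
\]
which by definition is the Laplace transform of the function $x\mapsto f(e^{-x})$, as desired.

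The only care that needs to be taken is justifying the change of variables, i.e.\ checking that $f(e^{-x})$ is a reasonable function to apply the Laplace transform to. Because $f\in L^{1}((0,\infty))$ and is supported in $[0,1]$, the substitution $t=e^{-x}$ is a smooth bijection between $(0,1]$ and $[0,\infty)$, and the substituted integrand $f(e^{-x})e^{-x}$ is measurable; absolute integrability in the new variable at $s=1$ follows from $\int_{0}^{\infty}|f(e^{-x})|e^{-x}\,dx=\int_{0}^{1}|f(t)|\,dt<\infty$, so the manipulation is valid. I do not anticipate any real obstacle here, as the statement is essentially just the standard observation that the Mellin transform becomes the Laplace transform under the exponential change of scale, restricted to the half-line where the support of $f$ lives.
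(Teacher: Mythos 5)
Your proof is correct and is exactly the computation the paper has in mind: the paper states this observation without proof, and the intended justification is precisely the substitution $t=e^{-x}$ in the Mellin integral (the same exponential change of variables the paper carries out explicitly in the proof of its Theorem 3.1). Your integrability remark is a fine level of care; nothing is missing.
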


This observation leads to the following useful characterization of complete positivity of Mellin transforms of functions supported on $[0,1]$, which we use in subsequent sections; in particular, for the proof of theorem \ref{thm:mainresult}.

\begin{corollary} Let $f\in L^1((0,\infty))$ be zero outside of $[0,1]$.  Then the Mellin transform of $f$ is completely monotone on $(0,\infty)$ iff $f$ is non-negative almost everywhere. \end{corollary}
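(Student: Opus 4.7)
The plan is to apply the preceding Observation to reduce the statement directly to Bernstein's theorem. Writing $\phi(s)=\int_0^\infty f(t)t^{s-1}\,dt$, the assumption that $f$ vanishes outside $[0,1]$ means the integral is really over $(0,1]$; substituting $t=e^{-x}$ converts this into
\[
\phi(s)=\int_0^\infty f(e^{-x})e^{-sx}\,dx,
\]
i.e.\ the Laplace transform of $g(x):=f(e^{-x})$, which is supported on $[0,\infty)$ and lies in $L^1$ since $\int_0^\infty |g(x)|\,dx=\int_0^1 |f(t)|\,t^{-1}\,dt$ --- we must check (or remark) that the $L^1$ hypothesis survives the change of variables; this is automatic on any interval $[0,a]$ with $a<1$ and the tail near $t=0$ becomes a tail near $x=\infty$ where the exponential weight $e^{-sx}$ in the Laplace integral absorbs anything needed for the transform to be defined on $(0,\infty)$.

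For the easy direction (``if''), if $f\geq 0$ a.e.\ on $(0,1]$ then $g\geq 0$ a.e.\ on $(0,\infty)$, so $g(x)\,dx$ is a positive measure and its Laplace transform $\phi(s)$ is completely monotone on $(0,\infty)$ --- this is the cone description noted just before the Observation, or equivalently one direction of Bernstein's theorem, and can be seen at once by differentiating under the integral to get $(-1)^n\phi^{(n)}(s)=\int_0^\infty x^n g(x)e^{-sx}\,dx\geq 0$.

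For the converse, assume $\phi$ is completely monotone on $(0,\infty)$. By Bernstein's theorem there exists a positive Borel measure $\mu$ on $[0,\infty)$ with
\[
\phi(s)=\int_0^\infty e^{-sx}\,d\mu(x).
\]
On the other hand the substitution above gives $\phi(s)=\int_0^\infty e^{-sx}g(x)\,dx$, i.e.\ $\phi$ is also the Laplace transform of the signed measure $g(x)\,dx$. The uniqueness of the Laplace transform on measures (via, for instance, the injectivity of the Laplace transform on finite signed measures, or the Stone--Weierstrass argument applied to the exponentials $e^{-sx}$) then forces $g(x)\,dx=d\mu(x)$ as measures, so $g\geq 0$ a.e., and translating back gives $f\geq 0$ a.e.\ on $(0,1]$; since $f\equiv 0$ off $[0,1]$, we conclude $f\geq 0$ a.e.

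The only genuinely delicate step is the uniqueness argument: one must justify comparing the Laplace transform of the absolutely continuous measure $g(x)\,dx$ with the Bernstein measure $\mu$, which requires at least local integrability of $g$ so that both objects are indeed locally finite signed (resp.\ positive) measures whose Laplace transforms converge on a common half-line. Once that technicality is dispatched, the rest is routine substitution.
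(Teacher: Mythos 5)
Your proof is correct and follows exactly the route the paper intends: the corollary is stated as an immediate consequence of the preceding Observation (Mellin transform of $f$ equals the Laplace transform of $f(e^{-x})$) combined with Bernstein's theorem and uniqueness of the Laplace transform, which is the same Bernstein-plus-uniqueness argument the paper itself spells out later in the proof of its Theorem 3.1. Your added remark about integrability of $f(e^{-x})$ near $x=\infty$ is a reasonable technical caveat, but it does not change the substance of the argument.
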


Finally, we will require the fact that  the product of two completely monotone functions on $I$ is a completely monotone function on $I$. For the convenience of the reader, we prove this well-known fact below.

\begin{lemma}\label{lem:semi-g}
Let $I$ be a real interval. Then the product of completely monotone functions on $I$ is itself completely monotone on $I$.
\end{lemma}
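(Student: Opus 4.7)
The plan is to prove this directly from the definition of completely monotone using the Leibniz rule for derivatives of products. Since completely monotone on $I$ means $(-1)^n h^{(n)}(x) \geq 0$ for every $x \in I$ and every nonnegative integer $n$, the claim reduces to a purely algebraic statement once we expand $(fg)^{(n)}$.

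Specifically, I would fix $f,g$ completely monotone on $I$, fix $x \in I$ and $n \geq 0$, and apply the Leibniz formula
\[
(fg)^{(n)}(x) = \sum_{k=0}^{n} \binom{n}{k} f^{(k)}(x)\, g^{(n-k)}(x).
\]
Multiplying both sides by $(-1)^n$ and distributing the sign as $(-1)^n = (-1)^k (-1)^{n-k}$, one obtains
\[
(-1)^n (fg)^{(n)}(x) = \sum_{k=0}^{n} \binom{n}{k} \bigl[(-1)^k f^{(k)}(x)\bigr]\bigl[(-1)^{n-k} g^{(n-k)}(x)\bigr].
\]
By hypothesis each bracketed factor is nonnegative, and the binomial coefficients are positive, so the sum is nonnegative. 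This gives $(-1)^n (fg)^{(n)}(x) \geq 0$ for all $n$ and all $x \in I$, which is exactly what we need.

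There is essentially no obstacle here: the only point requiring any comment is that $f$ and $g$ are automatically $C^\infty$ on the interior of $I$ (indeed, real analytic, as is classical and also follows from Bernstein's theorem in the case $I = (0,\infty)$), so the derivatives appearing in the Leibniz formula all exist; at endpoints of $I$ one works with one-sided derivatives, which are likewise guaranteed by complete monotonicity. I would include a brief remark to this effect but otherwise the proof is a one-line application of Leibniz together with the sign bookkeeping above.
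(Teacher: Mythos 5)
Your proof is correct and follows essentially the same route as the paper's: apply the Leibniz formula for $(fg)^{(n)}$, split $(-1)^n$ as $(-1)^k(-1)^{n-k}$, and observe that each term in the resulting sum is nonnegative. Your additional remark on smoothness and one-sided derivatives is a fine (though optional) refinement.
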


\begin{proof}
We only need to show that if $f$ and  $g$ be completely monotone functions on $I$ then the product $fg$ is completely monotone on $I$. Indeed, using the product rule for higher order derivatives, we find
\begin{eqnarray*}
(-1)^n(fg)^{(n)}(x)&=&(-1)^n\sum_{k=0}^n{n\choose k} f^{(n-k)}g^{(k)}(x)\\
&=&\sum_{k=0}^n{n\choose k}(-1)^{n-k}f^{(n-k)}(x)(-1)^kg^{(k)}(x).
\end{eqnarray*}
Both $(-1)^{n-k}f^{(n-k)}(x)$ and $(-1)^kg^{(k)}(x)$ are non-negative for all $x\in I$ and all $n-k$ ($k$, respectfully) $=0, 1, 2, \dots$. 
It follows that $fg$ is completely monotone on $I$ by definition.

\end{proof}

\section{Connection Between Dirichlet polynomials and Majorization \& Trumping}\label{sec:Connection}

Consider the following Dirichlet polynomial: $\zeta(s)=\sum_i\frac1{y_i^s}-\sum_i\frac1{x_i^s}$, which we can write as $\sum_n\frac{a_n}{n^s}$ with $a_n$ as in equation (\ref{eq:a_n}). We will relax our assumption that the $x_i, y_i$ need to be integers. For non-integer values we can simply use a general Dirichlet polynomial $\sum_{n=1}^k a_n e^{-\lambda_ns}$. For ease and consistency of notation we will still use $\sum_n\frac{a_n}{n^s}$ where $n$ is summed over all real numbers in $(1,\infty)$ rather than the natural numbers.  Note that is still a finite sum since $a_n=\# \{ i:y_i=n\} -\# \{ i:x_i=n\} $.

By considering this generalized Dirichlet polynomial, the discrete case of lemma \ref{lemma:KaNo63Cont} can be re-written as
\begin{theorem}\label{lemma:KaNo63D} Let $x$ and $y$ be vectors of the same length all of whose entries lie in $(1,\infty)$ and let $\zeta(s)=\sum_i\frac1{y_i^s}-\sum_i\frac1{x_i^s}$.
Then $x\prec y$ if and only if
\begin{enumerate}
\item[(i)] $\zeta(0)=0$;
\item[(ii)] $\zeta(-1)=0$;
\item[(iii)] $\frac{\zeta(s)}{s(s+1)}$ is completely monotone on $(0, \infty)$.
\end{enumerate}
\end{theorem}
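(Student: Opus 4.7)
The plan is to apply Lemma \ref{lemma:KaNo63Cont} to the signed measure $\mu=\sum_i(\delta_{y_i}-\delta_{x_i})$, regarded as supported on any closed interval $[a,b]\subset(1,\infty)$ strictly containing all entries of $x$ and $y$. Because $x\prec y$ is equivalent to $\int \phi\,d\mu\ge 0$ for every convex $\phi$, that lemma reduces the problem to (i$'$) $\int d\mu=0$, (ii$'$) $\int t\,d\mu=0$, and (iii$'$) $\mu_2(t)\ge 0$ on $[a,b]$. Direct evaluation gives $\zeta(0)=\int d\mu$ and $\zeta(-1)=\int t\,d\mu$, so (i$'$) and (ii$'$) match our (i) and (ii). The remaining task is to show that, when (i) and (ii) hold, condition (iii$'$) is equivalent to complete monotonicity of $\zeta(s)/[s(s+1)]$ on $(0,\infty)$.

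Starting from $\zeta(s)=\int_a^b t^{-s}\,d\mu(t)$, I would integrate by parts twice. Condition (i) gives $\mu_1(a)=\mu_1(b)=0$, and condition (ii) together with Fubini yields $\mu_2(b)=\int_a^b(b-t)\,d\mu(t)=0$ (with $\mu_2(a)=0$ trivially), so every boundary contribution vanishes and one obtains the clean identity
\[
\frac{\zeta(s)}{s(s+1)}=\int_a^b \mu_2(t)\,t^{-s-2}\,dt.
\]

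The substitution $u=1/t$ then converts the right-hand side into $\int_{1/b}^{1/a}\mu_2(1/u)\,u^{s}\,du=\int_0^\infty h(u)\,u^{s-1}\,du$, where $h(u)=u\,\mu_2(1/u)$ on $[1/b,1/a]$ and $h(u)=0$ elsewhere. The hypothesis $a>1$ places the support of $h$ inside $[0,1]$, so the Mellin-transform corollary (such a transform is completely monotone on $(0,\infty)$ iff the function is non-negative almost everywhere) applies: $\zeta(s)/[s(s+1)]$ is completely monotone on $(0,\infty)$ if and only if $h\ge 0$ a.e., if and only if $\mu_2\ge 0$ a.e.\ on $[a,b]$. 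Continuity of $\mu_2$ (being the integral of a bounded step function) upgrades this to pointwise non-negativity, which is (iii$'$). I expect the only delicate point to be the careful bookkeeping of boundary terms in the two integrations by parts, where conditions (i) and (ii) are exactly consumed, together with noticing that the assumption $x_i,y_i\in(1,\infty)$ is precisely what forces the support of $h$ into $[0,1]$ so that the Mellin-transform corollary can be invoked.
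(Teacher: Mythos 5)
Your proof is correct and follows essentially the same route as the paper's: reduce to the Karlin--Novikoff lemma, integrate by parts twice (with conditions (i) and (ii) killing the boundary terms) to reach $\frac{\zeta(s)}{s(s+1)}=\int_a^b\mu_2(t)\,t^{-s-2}\,dt$, and deduce the equivalence between complete monotonicity and $\mu_2\geq 0$ from a Bernstein-type transform argument. The only cosmetic difference is that you substitute $u=1/t$ and invoke the Mellin-transform corollary, while the paper substitutes $t=e^x$ and applies Bernstein's theorem to the Laplace transform of $\mu_2(e^x)e^{-x}$ --- by the paper's own observation these are the same step, and your explicit handling of the boundary terms (e.g.\ $\mu_2(b)=0$ via Fubini from (i) and (ii)) is if anything the more careful of the two.
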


\begin{proof}
Items (i) and (ii) can easily be seen to correspond to items (i) and (ii) of lemma \ref{lemma:KaNo63Cont}. We show the correspondence of the respective items (iii).

Using a well-known result by Abel in Analytic number theory, we have
\begin{eqnarray*}
\zeta(s)=\sum_{n=0}^x\frac{a_n}{n^s}=\frac{\mu_1}{x^s}+s\int_1^x\mu_1(t)t^{-s-1}\,dt,
\end{eqnarray*}
with $x$ large enough to capture all the terms and $\mu_1(t)=\sum_{0\leq n\leq t}a_n$.

Since $\mu_1(t)=0$ for $t$ sufficiently large, the term $\frac{\mu_1}{x^s}\rightarrow 0$ as $x\rightarrow \infty$. We then have
\begin{eqnarray*}
\zeta(s)&=&s\int_1^x\mu_1(t)t^{-s-1}\,dt \\
&=&t^{-(s+1)}\mu_2(t)\,|_1^\infty+s(s+1)\int_1^\infty\mu_2(t)t^{-(s+2)}\, dt,
\end{eqnarray*}
where we have used integration by parts.  Since $\mu_2(t)=0$ for $t$ sufficiently large, the first term on the RHS tends to 0 as $t\rightarrow \infty$ (it is also equal to zero at $t=1$). We therefore obtain
\begin{eqnarray*}
\zeta(s)=s(s+1)\int_1^\infty\mu_2(t)t^{-(s+2)}\, dt.
\end{eqnarray*}

Dividing by $s(s+1)$, we obtain
\[
\frac{\zeta(s)}{s(s+1)}=\int_1^\infty\mu_2(t)t^{-(s+2)}\, dt.
\]
By making the change of variables $t=e^x$, we obtain
\begin{eqnarray}\label{eq:DirichletLaplacet}
\frac{\zeta(s)}{s(s+1)}=\int_0^\infty\mu_2(e^x)e^{-x}e^{-sx}\, dx.
\end{eqnarray}
This is precisely the Laplace transform of the function $\mu_2(e^x)e^{-x}$.

Suppose $\mu_2(\cdot)\geq 0$, so that  $\mu_2(e^x)e^{-x}\geq0$.
%
%
%
By Bernstein's theorem, the Laplace transform of a positive measure is completely monotone. A generalized Dirichlet polynomial is always the Laplace transform of a function (not simply a general measure); moreover, Laplace transforms are unique up to differences on sets of measure 0. Therefore, provided $\mu_2(\cdot)\geq 0$, we have that $\frac{\zeta(s)}{s(s+1)}$ is completely monotone on $(0, \infty)$. Furthermore, if $\frac{\zeta(s)}{s(s+1)}$ is completely monotone on $(0, \infty)$  and if equation (\ref{eq:DirichletLaplacet}) holds, again using the uniqueness of the Laplace transform as well as Bernstein's theorem, it follows that $\mu_2(\cdot)\geq0$.

Note that $\mu_2(\cdot)\geq 0$ is precisely condition (iii) of lemma \ref{lemma:KaNo63Cont}.
\end{proof}

If there exists a catalyst $c$ such that $x\otimes c\prec y\otimes c$, then  condition (iii) of theorem \ref{lemma:KaNo63D} says
that if  $\tilde{\zeta}$ is the corresponding Dirichlet polynomial then
$\frac{\tilde{\zeta}(s)}{s(s+1)}$
is completely monotone on $(0, \infty)$.
It is easy to see that $\tilde{\zeta}(s)$ is given by
\begin{eqnarray*}
\tilde{\zeta}(s)&=&\sum_{i,j}\frac1{(y_ic_j)^s}-\frac1{(x_ic_j)^s}\\
&=&\left(\sum_i\frac1{y_i^s}-\frac1{x_i^s}\right)\left(\sum_j\frac1{c_j^s}\right).
\end{eqnarray*}
Thus we can write $\tilde{\zeta}(s)=\zeta(s)\zeta_2(s)$ where $\zeta_2(s)=\sum_j\frac1{c_j^s}$ is the Dirichlet polynomial corresponding to the non-zero catalyst vector $c$.

This leads us to the following observation:

\begin{observation} \label{Obs:Trump} Let $x$ and $y$ be vectors of the same length all of whose entries lie in $(1,\infty)$ and let $\zeta(s)=\sum_i\frac1{y_i^s}-\sum_i\frac1{x_i^s}$. By theorem \ref{lemma:KaNo63D} and the definition of trumping, we have $x\prec_T y$ if and only if 
\begin{enumerate}
\item[(i)] $\zeta(0)=0$;
\item[(ii)] $\zeta(-1)=0$;
\item[(iii)] There exists a Dirichlet polynomial $\zeta_2\not\equiv 0$ with non-negative coefficients such that
\[
\frac{\zeta(s)\zeta_2(s)}{s(s+1)}
\]
is completely monotone  on $(0, \infty)$.
\end{enumerate}
\end{observation}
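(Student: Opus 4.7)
The plan is to deduce the observation directly from Theorem~\ref{lemma:KaNo63D} applied to the pair $(x\otimes c,\, y\otimes c)$, using the factorization $\tilde\zeta(s)=\zeta(s)\zeta_2(s)$ with $\zeta_2(s)=\sum_j c_j^{-s}$ that is derived immediately above the observation.

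For the forward direction, suppose $x\prec_T y$ and choose a catalyst $c$ with strictly positive components such that $x\otimes c\prec y\otimes c$. Setting $\zeta_2(s)=\sum_j c_j^{-s}$, this is a nonzero Dirichlet polynomial whose coefficients are the multiplicities of the entries of $c$, hence positive integers. Applying Theorem~\ref{lemma:KaNo63D} to the majorization $x\otimes c\prec y\otimes c$ yields $\tilde\zeta(0)=0$, $\tilde\zeta(-1)=0$, and $\tilde\zeta(s)/(s(s+1))$ completely monotone on $(0,\infty)$. Because $\zeta_2(0)=\dim(c)>0$ and $\zeta_2(-1)=\sum_j c_j>0$ are both strictly positive, the vanishing of $\tilde\zeta=\zeta\zeta_2$ at $s=0,-1$ forces $\zeta(0)=\zeta(-1)=0$, establishing (i) and (ii); condition (iii) is immediate from the product expression for $\tilde\zeta$.

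For the converse, take any $\zeta_2(s)=\sum_k a_k b_k^{-s}$ with $a_k\geq 0$ satisfying (iii), and construct a catalyst whose Dirichlet polynomial is a positive scalar multiple of $\zeta_2$. When the $a_k$ are positive integers the catalyst is the vector listing each $b_k$ with multiplicity $a_k$; when the $a_k$ are positive rationals, multiplying $\zeta_2$ by a common denominator $M\in\mathbb{N}$ produces integer coefficients, and since complete monotonicity is preserved under multiplication by the positive constant $M$, condition (iii) still holds for $M\zeta_2$. Conditions (i), (ii) together with $M\zeta_2(0),\, M\zeta_2(-1)>0$ give $\tilde\zeta(0)=\tilde\zeta(-1)=0$, so Theorem~\ref{lemma:KaNo63D} applied in reverse produces $x\otimes c\prec y\otimes c$, i.e., $x\prec_T y$.

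The main obstacle is the case of irrational non-negative coefficients $a_k$, which I would handle by a density argument: the positive rationals are dense in the non-negative reals, and both the vanishing conditions (i), (ii) and the completely monotone condition in (iii) are stable under suitable limits of Dirichlet polynomials with non-negative coefficients. Perturbing the coefficients to nearby rationals and passing to the limit reduces the general case to the rational one already handled. Aside from this mild technicality, the observation is essentially a direct corollary of the product factorization $\tilde\zeta=\zeta\zeta_2$ combined with Theorem~\ref{lemma:KaNo63D}.
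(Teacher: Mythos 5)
Your forward direction, and your converse in the integer and rational cases, follow exactly the paper's route: apply Theorem \ref{lemma:KaNo63D} to the catalyzed pair $x\otimes c\prec y\otimes c$ and use the factorization $\tilde{\zeta}(s)=\zeta(s)\zeta_2(s)$ with $\zeta_2(s)=\sum_j c_j^{-s}$, whose value and positivity at $s=0,-1$ force $\zeta(0)=\zeta(-1)=0$. (One small point you and the paper both pass over: Theorem \ref{lemma:KaNo63D} asks for entries in $(1,\infty)$, so you should rescale the catalyst, which is harmless since majorization is invariant under multiplying both vectors by the same positive scalar; likewise clearing denominators by $M\in\mathbb{N}$ in the rational case is fine, as you say.)

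The genuine gap is your density argument for irrational non-negative coefficients. Complete monotonicity is a \emph{closed}, not open, condition: it survives passage to a limit of functions that are already completely monotone, but it is not stable under small perturbations of the coefficients of $\zeta_2$. Concretely, by the computation in the proof of Theorem \ref{lemma:KaNo63D}, condition (iii) amounts to non-negativity of the twice-integrated function $\tilde{\mu}_2$ attached to the catalyzed measure, and $\tilde{\mu}_2$ may vanish on part of its range; an arbitrarily small rational perturbation of the weights can then push it negative, so the nearby rational $\zeta_2$ you want need not satisfy (iii). Your ``pass to the limit'' runs in the wrong direction: you would need the rational approximants, not the limit, to satisfy (iii). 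This is precisely the delicate point; the paper's own justification silently reads $\zeta_2$ as the Dirichlet polynomial of an honest catalyst (positive integer multiplicities) and never treats general real coefficients. Within the paper's toolkit the cleanest repair is not density but Lemma \ref{easy}: if (i)--(iii) hold with some $\zeta_2\not\equiv 0$ having non-negative coefficients (the degenerate case $\zeta\equiv 0$ being trivial), then $\zeta(s)\zeta_2(s)/(s(s+1))$ is entire and completely monotone, hence strictly positive on $\mathbb{R}$, and since $\zeta_2>0$ on $\mathbb{R}$ one gets $\zeta(s)/(s(s+1))>0$ on $\mathbb{R}$, which is Turgut's (T1)--(T3), so $x\prec_T y$ by Theorem \ref{turg}; note, however, that this invokes Turgut's theorem, so if the observation is meant to help rederive Theorem \ref{turg} one must instead run an approximation argument with strict positivity margins in the spirit of \cite{Tur07}, not a bare appeal to denseness of the rationals.
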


We can reformulate the conditions (T1)-(T3) of Turgut's theorem by using the general Dirichlet polynomial $\zeta(s)=\sum_na_ne^{-\lambda_ns}$.  We note that, in terms of the functionals $f_r$ from theorem \ref{thm:Klimesh}, we have $\zeta(s)=\exp\{f_{-s}(y)-f_{-s}(x)\}$ when $s\not \in (-1,0)$; $\zeta(s)=\exp\{f_{-s}(x)-f_{-s}(y)\}$ when $s\in (-1,0)$; $\zeta^\prime (0)=f_{0}(y)-f_{0}(x)$ and $\zeta^\prime (-1)=f_{1}(x)-f_{1}(y)$.  We can use these observations to rewrite Turgut's theorem in terms of Dirichlet polynomials.

The following proposition can be taken as a succinct restatement of Turgut's theorem (theorem \ref{turg}).

\begin{proposition} Let $x$ and $y$ be vectors of the same length all of whose entries lie in $(1,\infty)$ and let $\zeta(s)=\sum_i\frac1{y_i^s}-\sum_i\frac1{x_i^s}$.  Then the following statements are equivalent:

\begin{enumerate}
\item $x\prec_T y$;
\item $\zeta$ is a general Dirichlet polynomial with simple zeros at $-1$ and $0$ such that
\[
\frac{\zeta(s)}{s(s+1)}
\]
is positive for all $s\in \mathbb{R}$;
\end{enumerate}
\end{proposition}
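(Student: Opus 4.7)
The plan is to recognize this proposition as a succinct encoding of Turgut's theorem (theorem \ref{turg}), via the dictionary between $\zeta$ and the functionals $f_r$ that was set out just before the statement. No genuinely new input is required: the argument amounts to a case-by-case verification that each of conditions (T1), (T2), (T3), together with the fact that trumping preserves both the length of the vectors and the sum of their entries, corresponds to exactly one aspect of condition (2).

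First I would observe that the values at $s=0$ and $s=-1$ are forced in the trumping setting: $\zeta(0)=0$ because $x$ and $y$ have the same length, and $\zeta(-1)=\sum_i y_i-\sum_i x_i$ vanishes precisely when the component sums agree, which is necessary for $x\prec_T y$ since majorization preserves sums and tensoring by a positive catalyst does not disturb the equality. Thus $\zeta$ automatically has zeros at $0$ and $-1$, and what remains is to relate the simplicity of these zeros together with positivity of $\zeta(s)/(s(s+1))$ at every $s\in\mathbb{R}$ to Turgut's strict inequalities.

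Next I would treat the three open intervals separately. For $s\in(0,\infty)$ one has $-s<0$ and hence $\sum_i y_i^{-s}=e^{f_{-s}(y)}$, $\sum_i x_i^{-s}=e^{f_{-s}(x)}$, so $\zeta(s)=e^{f_{-s}(y)}-e^{f_{-s}(x)}$ carries the same sign as $f_{-s}(y)-f_{-s}(x)$; since $s(s+1)>0$ on this interval, positivity of $\zeta(s)/(s(s+1))$ is equivalent to (T1) at $\nu=-s\in(-\infty,0)$. The same reasoning on $(-\infty,-1)$, where $-s>1$, yields (T2). On $(-1,0)$ the formula for $f_r$ in the range $0<r<1$ reverses a sign, giving $\zeta(s)=e^{-f_{-s}(y)}-e^{-f_{-s}(x)}$, but this is exactly compensated by the sign change of $s(s+1)$ on $(-1,0)$, so we again recover (T1), this time for $\nu=-s\in(0,1)$.

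The last step, and the only genuinely delicate point, is to interpret positivity of $\zeta(s)/(s(s+1))$ at the removable singularities $s=0$ and $s=-1$. Since $\zeta$ is entire and already known to vanish at both points, a Taylor expansion gives $\lim_{s\to 0}\zeta(s)/(s(s+1))=\zeta'(0)$ and $\lim_{s\to -1}\zeta(s)/(s(s+1))=-\zeta'(-1)$; requiring that these (extended) values be strictly positive is equivalent both to the simplicity of the zeros and, using $\zeta'(0)=f_0(y)-f_0(x)$ and $\zeta'(-1)=f_1(x)-f_1(y)$, to (T1) at $\nu=0$ and to (T3), respectively. Combining the three open intervals with these two boundary values reproduces exactly Turgut's three strict inequalities, so (1) and (2) coincide.
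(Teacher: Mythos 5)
Your argument is correct, but it reaches the proposition by a genuinely different route than the paper. You translate condition (2) interval-by-interval into Turgut's inequalities (T1)--(T3) via the $f_r$ dictionary (your sign bookkeeping on $(0,\infty)$, $(-1,0)$, $(-\infty,-1)$ and at the removable singularities $s=0,-1$, where the extended values are $\zeta'(0)$ and $-\zeta'(-1)$, is accurate --- indeed more careful than the paper's shorthand identities preceding the statement), and then you invoke Theorem \ref{turg} as a black box. The paper instead deduces the proposition from its own machinery: Observation \ref{Obs:Trump} (trumping is equivalent to the existence of a catalyst Dirichlet polynomial $\zeta_2\not\equiv 0$ with non-negative coefficients making $\zeta(s)\zeta_2(s)/(s(s+1))$ completely monotone, obtained by applying Theorem \ref{lemma:KaNo63D} to $x\otimes c\prec y\otimes c$) combined with Theorem \ref{thm:mainresult} in the case $r=2$ (the P\'olya positivity lemma together with closure of complete monotonicity under products). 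The payoff of the paper's route is that it is independent of Turgut's original argument and therefore constitutes a new, simplified derivation of Theorem \ref{turg}; your route proves the stated equivalence more quickly, but only by using Turgut's theorem itself, so it establishes that the proposition is a faithful restatement of that theorem rather than re-proving it, which is the point of the paper's framework. One caveat applies equally to both arguments: the equivalence as literally stated fails when $x^{\downarrow}=y^{\downarrow}$ (then $x\prec_T y$ trivially while $\zeta\equiv 0$, so (2) fails), and the proposition must be read with the same distinctness hypothesis as Theorem \ref{turg}; note that in your direction (2)$\Rightarrow$(1) the appeal to Theorem \ref{turg} is legitimate, since (2) forces $\zeta\not\equiv 0$ and hence $x^{\downarrow}\neq y^{\downarrow}$, but in the direction (1)$\Rightarrow$(2) the distinctness assumption is genuinely needed.
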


The proof follows from theorem \ref{thm:mainresult}.

If we do not assume $\frac{\zeta(s)}{s(s+1)}$ is completely monotone on $(0,\infty)$, the integral $\int_0^\infty\mu_2(e^x)e^{-x}e^{-sx}\, dx$ could be positive \emph{without} $\mu_2$ being positive. For this case, it is possible that $\frac{\zeta(s)}{s(s+1)}>0$ without $\frac{\zeta(s)}{s(s+1)}$ being completely monotone on $(0, \infty)$. This would be precisely the case of non-trivial trumping (that is, trumping that is not majorization).

\section{Higher Order Convexity}\label{sec:r-convex}
Consider a sequence $\chi=(\chi_n)_n$. For any natural number $r$, we define the \emph{$r$-th difference operator} \cite{PaSi58} by $\Delta^r\chi_n=\sum_{j=0}^r(-1)^j{r \choose j}\chi_{n-j}$, with $\chi_k\equiv 0$ for negative $k$. Thus, for example, $\Delta\chi_n=\chi_n-\chi_{n-1}$ and $\Delta^2\chi_n=\chi_{n}-2\chi_{n-1}+\chi_{n-2}$. We say that a sequence $\chi=(\chi_n)_n$ is \emph{$r$-convex} if $\Delta^r\chi_n\geq 0$ for all $n=1, 2,\dots$.

We can also define higher order continuous functions using divided differences.

\begin{definition} Let $I$ be an interval and $f:I\to \R$ and let $\{ x_j\}$ be distinct elements of $I$.  Then we define the first order divided difference as $f[x_0,x_1]=\frac{f(x_1)-f(x_0)}{x_1-x_0}$.  The second order divided difference is $f[x_0,x_1,x_2]=\frac{f[x_1,x_2]-f[x_1,x_0]}{x_2-x_0}$.  Higher order divided differences are defined in a similar inductive manner $f[x_0,...,x_n]=\frac{f[x_1,...,x_n]-f[x_{n-1},...,x_0]}{x_n-x_0}$.  An equivalent characterization is that $f[x_0,...,x_n]$ is the coefficient of $x^n$ in the Lagrange interpolating polynomial to $f$ at the nodes $x_0,x_1,...,x_n$. \end{definition}

\begin{definition}\cite{Bul71} Let $I$ be an interval and $f:I\to \R$ and let $r\in \mathbb{N}$, $f$ is said to be a \emph{convex function} of order $r$ (or $r$-convex function) on $I$ if $f[x_0,...,x_r]\geq 0$ whenever $\{x_j\}_{j=0}^{r}$ is a $(r+1)$-tuple of distinct numbers in $I$. \end{definition}

We note that  the term ``1-convex'' is equivalent to ``increasing'', and ``2-convex'' is equivalent to ``convex'', for both higher order convex sequences and higher order convex functions.

\begin{observation} There is a close relationship between $r$-convex functions and $r$-convex sequences.  If $f$ is an $r$-convex function on $(0,\infty)$, then $\{ f(n) \}_{n\in \mathbb{N}}$ is an $r$-convex sequence.
\end{observation}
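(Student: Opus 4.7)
The plan is to invoke the classical identity relating the $r$-th backward difference of the sequence $\chi_n := f(n)$ to the divided difference of $f$ at consecutive integer nodes, namely
\[
\Delta^r \chi_n \;=\; r!\, f[n-r,\, n-r+1,\, \ldots,\, n],
\]
valid for every $n \ge r+1$ (so that all $r+1$ nodes lie in $(0,\infty)$). Once this identity is in hand, the observation is immediate: the $(r+1)$-tuple $(n-r,\ldots,n)$ consists of distinct points in $(0,\infty)$, so the $r$-convexity of $f$ forces $f[n-r,\ldots,n]\ge 0$, and hence $\Delta^r\chi_n\ge 0$.

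First I would establish the identity by induction on $r$. The base case $r=1$ is nothing more than
\[
f[n-1,n] \;=\; \frac{f(n)-f(n-1)}{n-(n-1)} \;=\; \chi_n - \chi_{n-1} \;=\; \Delta\chi_n.
\]
For the inductive step, the recursive definition of divided differences gives
\[
f[n-r,\ldots,n] \;=\; \frac{f[n-r+1,\ldots,n] - f[n-r,\ldots,n-1]}{r}.
\]
The induction hypothesis identifies the two terms in the numerator with $\tfrac{1}{(r-1)!}\Delta^{r-1}\chi_n$ and $\tfrac{1}{(r-1)!}\Delta^{r-1}\chi_{n-1}$ respectively; their difference is $\tfrac{1}{(r-1)!}\Delta^r\chi_n$ directly from the definition of the backward difference, and dividing by $r$ closes the induction.

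There is no serious obstacle; the argument is essentially a mechanical classical computation. The one mild subtlety is the small-$n$ case: when $n\le r$, the quantity $\Delta^r\chi_n$ involves boundary values $\chi_k$ with $k\le 0$, which the paper sets to zero by convention rather than as evaluations of $f$, so the divided difference identity above ceases to apply directly. The natural reading of the observation is therefore that it asserts $\Delta^r\chi_n\ge 0$ for all $n\ge r+1$; for this range the argument above suffices, and the remaining initial indices may be handled separately (or omitted as irrelevant to the ensuing use of the observation).
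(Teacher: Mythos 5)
The paper states this observation without any proof, so there is no ``paper proof'' to compare against; your argument is the natural one and it is correct. The identity $\Delta^r\chi_n = r!\,f[n-r,\dots,n]$ for consecutive integer nodes is the standard relation between finite differences and divided differences at equally spaced points, your base case is right, and the inductive step works exactly as you say: the recursion $f[x_0,\dots,x_r]=\frac{f[x_1,\dots,x_r]-f[x_0,\dots,x_{r-1}]}{x_r-x_0}$ has denominator $x_r-x_0=r$ for the nodes $n-r,\dots,n$, and the numerator is $\frac{1}{(r-1)!}\bigl(\Delta^{r-1}\chi_n-\Delta^{r-1}\chi_{n-1}\bigr)=\frac{1}{(r-1)!}\Delta^{r}\chi_n$ since the backward difference operator satisfies $\Delta^r=\Delta\circ\Delta^{r-1}$ (equivalently, Pascal's rule on the binomial coefficients). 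Your caveat about small $n$ is also well taken and is the right reading of the statement: the padded values $\chi_k$ for $k\le 0$ are not values of $f$ (indeed $f$ is only defined on $(0,\infty)$), and if they were fed into the convexity requirement the claim could fail --- e.g.\ for $r=2$ and the convex function $f(x)=(x-10)^2$ one gets $f(2)-2f(1)+0<0$ at $n=2$ --- so the observation is to be understood as nonnegativity of the genuine $r$-th differences, i.e.\ for $n\ge r+1$, which is precisely what your argument establishes.
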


We define  $\mu_k(x)=\int_a^x\mu_{k-1}(t)\,dt$ for $k=1, 2, 3, \dots$, where $\mu_0:=\mu$.  The $r$-convex analogues to lemmas \ref{lemma:KaNo63Cont} and \ref{lemma:KaNo63} are as follows.

\begin{lemma} Let $r\in \mathbb{N}$.
The inequality
\begin{eqnarray}\label{eq:KNrconvCont}
\int_a^b\phi(t)\,d\mu\geq 0
\quad\textnormal{ for all $r$-convex functions $\phi$}
\end{eqnarray}
  is equivalent to the following two conditions.
\begin{enumerate}
\item[(i)] $\int_a^bx^k\,d\mu=0$ for $k=0,1,...,r-1$;
\item[(ii)] $(-1)^r\mu_r(x)\geq 0$ for all $x\in [a,b]$.
\end{enumerate}
\end{lemma}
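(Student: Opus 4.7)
The plan is to generalize the integration-by-parts argument underlying lemma~\ref{lemma:KaNo63Cont} by iterating it $r$ times rather than twice. By construction $\mu_k(a)=0$ for all $k\geq 1$, and a short induction (or Fubini) shows
\begin{equation*}
\mu_k(b) = \frac{1}{(k-1)!}\int_a^b (b-t)^{k-1}\,d\mu(t) \qquad (k\geq 1),
\end{equation*}
so expanding $(b-t)^{k-1}$ in powers of $t$ shows that condition~(i) is equivalent to $\mu_k(b)=0$ for $k=1,2,\ldots,r$.

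For sufficiency, assume (i) and (ii). For $\phi\in C^{r}[a,b]$, apply integration by parts $r$ times, using $d\mu_1=d\mu$ and $d\mu_{j+1}=\mu_j\,dt$ for $j\geq 1$. At each stage the boundary term is $[\phi^{(j)}\mu_{j+1}]_a^b$, which vanishes because $\mu_{j+1}(a)=\mu_{j+1}(b)=0$ for $j=0,1,\ldots,r-1$. One arrives at the key identity
\begin{equation*}
\int_a^b \phi(t)\,d\mu(t) = (-1)^r\int_a^b \phi^{(r)}(t)\,\mu_r(t)\,dt.
\end{equation*}
A smooth $r$-convex function satisfies $\phi^{(r)}\geq 0$, and (ii) says $(-1)^r\mu_r\geq 0$, so the integrand is non-negative and (\ref{eq:KNrconvCont}) follows. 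A general $r$-convex $\phi$ on $[a,b]$ is handled by a standard mollification: convolution with a smooth non-negative approximate identity preserves $r$-convexity (divided differences are linear and the kernel is non-negative), and the resulting $\phi_\varepsilon$ converge uniformly to $\phi$, so the inequality passes to the limit.

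For necessity, test the inequality $\int\phi\,d\mu\geq 0$ with strategically chosen $r$-convex functions. Polynomials of degree $<r$ are $r$-convex (their $r$-th divided differences vanish), and so are their negatives, so applying the inequality with $\phi(t)=\pm t^k$ for $k=0,1,\ldots,r-1$ yields (i). For (ii), fix $x_0\in[a,b]$ and test with the truncated power $\phi_{x_0}(t)=(t-x_0)_+^{r-1}$, the canonical $r$-convex function (its $r$-th divided differences are the non-negative B-spline weights). A direct integration by parts on $[x_0,b]$, iterated $r-1$ times and using (i) to kill the boundary terms at $b$ at each stage, yields
\begin{equation*}
\int_a^b (t-x_0)_+^{r-1}\,d\mu(t) = (-1)^r(r-1)!\,\mu_r(x_0),
\end{equation*}
so the hypothesis forces $(-1)^r\mu_r(x_0)\geq 0$.

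The main technical obstacle is the smoothness issue in the forward direction: an $r$-convex function on a closed interval is only guaranteed to be $C^{r-2}$, so the formal IBP identity must be justified by mollification. This in turn requires $\phi$ to be extended $r$-convexly across the endpoints (for instance by its degree $r-1$ Taylor expansion) and convolution to preserve $r$-convexity; both points are routine but unavoidable. With those in hand, the argument is a clean $r$-fold iteration of the two-step integration by parts that proves lemma~\ref{lemma:KaNo63Cont}.
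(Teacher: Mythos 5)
Your proposal is correct and takes essentially the same route as the paper: necessity of (i) via the $r$-convex test functions $\pm t^k$, the deduction $\mu_k(b)=0$ for $1\leq k\leq r$, and the $r$-fold integration by parts identity $\int_a^b\phi\,d\mu=(-1)^r\int_a^b\phi^{(r)}\mu_r\,dx$ driving both directions. You in fact supply details the paper leaves implicit, namely the explicit truncated-power test functions $(t-x_0)_+^{r-1}$ yielding condition (ii) and the mollification argument extending sufficiency from $C^r$ functions to general $r$-convex functions.
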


\begin{proof}
The proof follows the original proof from \cite{KaNo63}. If the inequality (\ref{eq:KNrconvCont}) holds for all $r$-convex functions $\phi$, then since $\psi_\pm:=\pm x^k$ are $r$-convex for $k=0,1,...,r-1$, item (i) follows immediately. The $k=0$ case of item (i) implies $\mu_1(b)=0$, and the $k=0$ and $k=1$ cases of item (i) together imply $\mu_2(b)=0$, these cases together with the $k=2$ case in turn implies $\mu_3(b)=0$. We can repeat this to obtain $\mu_k(b)=0$, for all $1\leq k\leq r$. Applying integration by parts to $\int_a^b\phi(t)\,d\mu$  $r$ times yields
\begin{eqnarray}\label{eq:intmu_r}
\int_a^b\phi(t)\,d\mu=(-1)^r\int_a^b\phi^{(r)}\mu_r(x)\,dx.
\end{eqnarray}
The assumption that equation (\ref{eq:intmu_r}) is non-negative, together with the assumption that $\phi$ is $r$-convex, gives item (ii).

These two items are also sufficient: For an $r$-th differentiable function $\phi$ satisfying  items (i) and (ii), equation (\ref{eq:intmu_r}) holds, which, assuming $\phi$ is $r$-convex, implies $\int_a^b\phi(t)\,d\mu\geq 0$.
\end{proof}

The discrete version of the above lemma is stated below for completeness.

\begin{lemma}\label{lemma:KNrconvDiscrete} Let $r\in \mathbb{N}$.
The inequality
\begin{eqnarray}\label{eq:KNrconv}
\sum_{0\leq i\leq m} a_i\chi_n\geq 0
\quad\textnormal{ for all $r$-convex sequences $(\chi_n)_n$}
\end{eqnarray}
  is equivalent to the following two conditions.
\begin{enumerate}
\item[(i)] $\sum_{0\leq n\leq m}n^k a_n=0$ for $k=0,1,...,r-1$;
\item[(ii)] If $r$ is even, $\sum_{i_{r}=0}^{i_r+1}\sum_{i_{r-1}=0}^{i_r}\dots\sum_{i_2=0}^{i_3}\sum_{i_1=0}^{i_2}a_{i_1}\geq 0$ where $0\leq i_j\leq m$ for all $j\leq r+1$, with the reverse inequality for odd $r$.
\end{enumerate}
\end{lemma}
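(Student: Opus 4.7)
My plan is to mirror the proof of the preceding continuous lemma, replacing integration by parts with Abel summation. Define the iterated partial sums
\[
  A^{(0)}_n := a_n, \qquad A^{(j)}_n := \sum_{k=0}^{n} A^{(j-1)}_k \quad (j \geq 1);
\]
these are the discrete counterparts of the iterated integrals $\mu_j$ from Lemma \ref{lemma:KaNo63Cont}. A routine unwinding of the nested summation in condition (ii) shows that it is equivalent to the compact assertion $(-1)^r A^{(r)}_k \geq 0$ for all $0 \leq k \leq m$, so the task is to prove the equivalence of inequality (\ref{eq:KNrconv}) with the joint conditions ``moment vanishing'' plus ``signed positivity of $A^{(r)}$''.

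The first step is to establish, by iterating Abel summation $r$ times, the algebraic identity
\[
  \sum_{n=0}^{m} a_n \chi_n \;=\; \textnormal{(boundary terms involving } A^{(j)}_m,\; j=1,\dots,r\textnormal{)} \;+\; (-1)^r \sum_{k=0}^{m-r} A^{(r)}_k \, \Delta^{r} \chi_{k+r}.
\]
A straightforward induction on $r$ identifies the boundary contributions with an invertible linear combination of the $A^{(j)}_m$, and a change of basis between the monomials $\{1,n,\dots,n^{r-1}\}$ and the basis of falling factorials dual to iterated summation identifies the vanishing of all $A^{(j)}_m$ ($j = 1, \dots, r$) with condition (i).

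With the identity in hand, the implication (i) $\&$ (ii) $\Rightarrow$ (\ref{eq:KNrconv}) is immediate: (i) kills the boundary terms, and (ii) makes every summand $(-1)^r A^{(r)}_k \Delta^r \chi_{k+r}$ non-negative for any $r$-convex $\chi$. For the converse I would substitute specific $r$-convex test sequences into (\ref{eq:KNrconv}). The truncated binomial sequences $\chi^{(j)}_n := \binom{n-j-1}{r-1}$ (extended by zero for negative arguments) satisfy $\Delta^r \chi^{(j)}_n = \delta_{n,\, j+r}$, so they pick out $(-1)^r A^{(r)}_j \geq 0$ and hence yield (ii); the moment conditions (i) are then extracted from polynomial sequences $\chi_n = n^k$ for $k < r$, which are annihilated by $\Delta^r$ away from the boundary, using the already-established information on $A^{(r)}$ to promote the resulting one-sided inequalities into equalities.

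The main obstacle I anticipate is the bookkeeping inside the iterated Abel-summation identity: tracking the index shifts contributed at each level and verifying the exact correspondence between the residual boundary data and the moment conditions. Formulating a clean inductive statement at the level of general $r$ — rather than running the computation afresh for each $r$ — will keep this manageable. A secondary subtlety is that the paper's convention $\chi_k \equiv 0$ for negative $k$ slightly restricts which polynomial sequences are $r$-convex, so the naive ``polarity trick'' of plugging in $\pm \chi_n = \pm n^k$ only partially delivers (i); the remaining equalities are recovered by combining the partial moment inequalities so obtained with the pointwise sign information on $A^{(r)}$ extracted from condition (ii).
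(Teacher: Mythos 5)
Your overall plan---transcribing the paper's integration-by-parts proof of the continuous Lemma \ref{lemma:KaNo63Cont} into a discrete Abel-summation argument---is the natural one (the paper itself states this discrete lemma without proof, ``for completeness''), and the sufficiency half of your argument is sound: the identity $\sum_{n=0}^m a_n\chi_n=(\textnormal{boundary terms in the }A^{(j)})+(-1)^r\sum_{k=0}^{m-r}A^{(r)}_k\Delta^r\chi_{k+r}$ is correct, condition (i) is equivalent to $A^{(j)}_m=0$ for $j=1,\dots,r$ (triangular change of basis between monomials and the binomial weights $\binom{m+j-1-i}{j-1}$) and kills the boundary, and the surviving sum involves $\Delta^r\chi_n$ only for $n\geq r$, so (i) and (ii) give (\ref{eq:KNrconv}).

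The gap is in the necessity direction, where your order of deduction is backwards and, as written, circular. Plugging in $\chi^{(j)}_n=\binom{n-j-1}{r-1}$ does \emph{not} by itself ``pick out'' $(-1)^rA^{(r)}_j\geq 0$: in your identity the boundary terms survive, or, computed directly, $\sum_n a_n\chi^{(j)}_n=\sum_{n=0}^m a_n\binom{n-j-1}{r-1}+(-1)^rA^{(r)}_j$, because $\chi^{(j)}$ differs from the degree-$(r-1)$ polynomial $\binom{n-j-1}{r-1}$ only on $0\leq n\leq j$, where that polynomial equals $(-1)^{r-1}\binom{j-n+r-1}{r-1}$. The first (moment) term is unknown until (i) is proved, so (ii) needs (i) first; yet your route to (i) invokes ``the already-established information on $A^{(r)}$'', i.e.\ (ii). Moreover, the repair you envisage for the negative-index convention cannot succeed, because under the literal reading ($\chi_k\equiv 0$ for $k<0$ and $\Delta^r\chi_n\geq 0$ for all $n\geq 1$) the equivalence itself fails: for $r=2$, $m=1$, $a=(-2,1)$ one has $-2\chi_0+\chi_1=\Delta^2\chi_1\geq 0$ for every such sequence, while both (i) and (ii) fail. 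The correct move is to read $r$-convexity in the standard way used in the paper's sources, namely $\Delta^r\chi_n\geq 0$ only for $n\geq r$, where no truncated terms occur; then $\pm n^k$ for $0\leq k\leq r-1$ are all admissible test sequences, (i) follows at once (this is exactly how the paper handles the continuous case with $\pm x^k$), and with the boundary killed by (i) your sequences $\chi^{(j)}$---or the identity directly---yield (ii). With the two steps taken in that order, your argument is complete and is the exact discrete counterpart of the paper's proof of Lemma \ref{lemma:KaNo63Cont}.
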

In the discrete setting, it is notationally convenient to  denote  $\sum_{i_{r}=0}^{i_r+1}\sum_{i_{r-1}=0}^{i_r}\dots\sum_{i_2=0}^{i_3}\sum_{i_1=0}^{i_2}a_{i_1}$ by $\mu_r$, as in the continuous case.

\medskip

Using Descartes' rule of signs, we can restate Corollary 4.1 of \cite{Nie05} ever so slightly, as follows.
\begin{theorem}
Let $a=(a_0, a_1, \dots)$ be a real vector. The inequality $\sum_na_n\chi_n\geq 0$ holds for all $r$-convex sequences $\chi=(\chi_n)_n$ if an only if the polynomial $p(z)=\sum_na_nz^n$ has a root of multiplicity $r$ at $z=1$ and all the coefficients of the polynomial  $\frac{p(z)}{(z-1)^r}$ are non-negative.
\end{theorem}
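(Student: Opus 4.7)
The plan is to reduce the theorem to Lemma \ref{lemma:KNrconvDiscrete} by translating its two necessary and sufficient conditions on the sequence $(a_n)$ into polynomial conditions on $p(z) = \sum_n a_n z^n$.

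For condition (i) of the lemma, I would observe that $p^{(k)}(1) = \sum_n n(n-1) \cdots (n-k+1)\, a_n$ expresses each derivative of $p$ at $1$ as a linear combination of the power moments $M_j = \sum_n n^j a_n$ for $0 \leq j \leq k$, via the triangular-with-unit-diagonal change of basis from falling factorials to monomials. Hence the vanishing of $M_0, M_1, \ldots, M_{r-1}$ is equivalent to $p(1) = p'(1) = \cdots = p^{(r-1)}(1) = 0$, which is to say that $p(z)$ has a root of multiplicity at least $r$ at $z = 1$.

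For condition (ii), assuming the multiplicity-$r$ root at $z=1$, I would compute the coefficients of $q(z) = p(z)/(z-1)^r$ by induction on $r$. For the base case $r=1$, write $\tilde p(z) = p(z)/(z-1) = \sum_j \tilde c_j z^j$ and match coefficients in $p(z) = (z-1)\tilde p(z)$ to obtain the recursion $a_n = \tilde c_{n-1} - \tilde c_n$, which telescopes to $\tilde c_n = -\sum_{i=0}^n a_i = -\mu_1(n)$. Iterating $r$ times---each pass replaces the coefficient sequence by its negated partial sum---the coefficients of $q(z)$ become $c_j = (-1)^r \mu_r(j)$ in the iterated partial-sum notation $\mu_r$ from Lemma \ref{lemma:KNrconvDiscrete}. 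Therefore the condition that all $c_j$ are non-negative is equivalent to $(-1)^r \mu_r(j) \geq 0$ for every $j$, which reads as $\mu_r \geq 0$ for even $r$ and $\mu_r \leq 0$ for odd $r$, precisely condition (ii) of the lemma.

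The only real obstacle is the sign and index bookkeeping in the inductive computation of $c_j$; the argument is algebraic throughout, and reduces essentially to iterating the $r=1$ case that already underlies the $r=2$ instance noted after Theorem \ref{thm:Gav}. Descartes' rule of signs enters only to the extent that coefficient-positivity of $q(z)$ is the cleanest way to certify absence of positive real roots of $p(z)$ apart from the root at $z=1$, which is how Corollary 4.1 of \cite{Nie05} is originally phrased; once conditions (i) and (ii) are translated as above, Lemma \ref{lemma:KNrconvDiscrete} delivers the conclusion.
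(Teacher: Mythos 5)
Your proof is correct, but it is not the route the paper takes: the paper offers no proof of this theorem at all---it is presented as a restatement (``using Descartes' rule of signs'') of Corollary 4.1 of \cite{Nie05}, so the burden of proof rests entirely on that citation. You instead derive the statement from lemma \ref{lemma:KNrconvDiscrete}, and the two translation steps you give are sound: the vanishing of the moments $\sum_n n^k a_n$ for $0\le k\le r-1$ is equivalent, via the triangular change of basis between falling factorials and monomials, to $p(1)=p'(1)=\cdots=p^{(r-1)}(1)=0$, i.e.\ to $(z-1)^r$ dividing $p(z)$; and since each exact division by $(z-1)$ replaces the coefficient sequence by its negated partial sums, the coefficients of $p(z)/(z-1)^r$ are $(-1)^r\mu_r(j)$ (with out-of-range coefficients equal to zero), so their non-negativity is precisely condition (ii) of the lemma, with the parity convention matching ($\mu_r\ge 0$ for even $r$, $\le 0$ for odd $r$; the $r=1,2$ cases check out against the discussion following theorem \ref{thm:Gav}). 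In effect you have produced the $r$-fold analogue of the translation the paper itself carries out only for $r=2$ when explaining why theorem \ref{thm:Gav} is equivalent to lemma \ref{lemma:KaNo63}. What your approach buys is a self-contained derivation inside the paper's own framework; what it costs is that lemma \ref{lemma:KNrconvDiscrete} is itself only stated ``for completeness'' there (deducing its discrete form from the continuous Karlin--Novikoff lemma needs an extension of an $r$-convex sequence to an $r$-convex function), whereas the paper's citation to \cite{Nie05} outsources the whole statement, Descartes' rule entering only in the cosmetic rephrasing, much as you surmise. One small wording point: your argument literally gives a root of multiplicity at least $r$; exact multiplicity $r$ follows for $a\neq 0$ because a nonzero polynomial with non-negative coefficients cannot vanish at $z=1$, and the case $a=0$ is degenerate (compare the hypothesis $\sum_i a_i^2>0$ in theorem \ref{thm:Gav}).
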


For our setting, we wish to prove a generalized version of this theorem for Dirichlet polynomials.
\begin{theorem}\label{thm:DSrconv}
Let $a=(a_0, a_1, \dots)$ be a real vector. The inequality $\sum_na_n\chi_n\geq 0$ holds for all $r$-convex sequences $\chi=(\chi_n)$ if an only if the Dirichlet polynomials $\zeta(s)=\sum_n\frac{a_n}{n^s}$ has zeros at $s=0, -1, -2, \dots, -r+1$  and $\frac{(-1)^r\zeta(s)}{\Pi_{k=0}^{r-1}(s+k)}$ is completely monotone on $(0, \infty)$.
\end{theorem}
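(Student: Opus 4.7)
The plan is to mirror the structure of the proof of Theorem \ref{lemma:KaNo63D}, using Lemma \ref{lemma:KNrconvDiscrete} as the combinatorial input and then running an iterated integration-by-parts argument to convert the statement into one about a Laplace transform, where Bernstein's theorem finishes the job.

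First, I would apply Lemma \ref{lemma:KNrconvDiscrete} to rephrase the inequality $\sum_n a_n \chi_n \geq 0$ for all $r$-convex sequences as the conjunction of
\begin{enumerate}
\item[(i)] $\sum_n n^k a_n = 0$ for $k=0,1,\dots,r-1$,
\item[(ii)] $(-1)^r \mu_r \geq 0$,
\end{enumerate}
where $\mu_r$ denotes the $r$-fold iterated partial sum as defined after the lemma. Since $\zeta(-k) = \sum_n a_n n^k$, condition (i) translates directly into the statement that $\zeta$ vanishes at $s = 0, -1, -2, \dots, -r+1$. So the only nontrivial part is to show that condition (ii) is equivalent to complete monotonicity of $\frac{(-1)^r \zeta(s)}{\prod_{k=0}^{r-1}(s+k)}$ on $(0,\infty)$.

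Next I would iterate the Abel summation / integration-by-parts trick used in the proof of Theorem \ref{lemma:KaNo63D}. Condition (i) is precisely what is needed to kill the boundary terms at each step: the $k=0$ case makes $\mu_1$ eventually vanish, and in combination with the $k=1,\dots,j-1$ cases it forces $\mu_j$ to vanish for sufficiently large argument and at the lower endpoint $t=1$. An induction on $r$ then yields the identity
\[
\zeta(s) \;=\; \Bigl(\prod_{k=0}^{r-1}(s+k)\Bigr)\int_1^\infty \mu_r(t)\, t^{-(s+r)}\,dt.
\]
Dividing through by $\prod_{k=0}^{r-1}(s+k)$ and making the substitution $t = e^x$ (so $dt = e^x\,dx$ and $t^{-(s+r)} = e^{-(s+r)x}$) gives
\[
\frac{(-1)^r\zeta(s)}{\prod_{k=0}^{r-1}(s+k)} \;=\; \int_0^\infty (-1)^r \mu_r(e^x)\,e^{-(r-1)x} e^{-sx}\,dx,
\]
which realises the left-hand side as the Laplace transform of the function $g(x) := (-1)^r \mu_r(e^x)\,e^{-(r-1)x}$.

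Finally, I would invoke Bernstein's theorem together with the uniqueness of Laplace transforms, exactly as in the $r=2$ case: the left-hand side is completely monotone on $(0,\infty)$ if and only if $g(x) \geq 0$ almost everywhere, and since $e^{-(r-1)x} > 0$ this is equivalent to $(-1)^r \mu_r \geq 0$, i.e.\ to condition (ii). Combining with the equivalence of (i) and the vanishing of $\zeta$ at $0,-1,\dots,-r+1$ gives the result. The main obstacle I anticipate is bookkeeping in the inductive integration by parts to ensure that all boundary contributions vanish; however this is purely mechanical once one notes that condition (i) is strong enough to successively kill $\mu_1(b),\mu_2(b),\dots,\mu_{r-1}(b)$ together with the corresponding contributions at $t=1$.
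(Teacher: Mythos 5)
Your proposal is correct and follows essentially the same route as the paper's own proof: Lemma \ref{lemma:KNrconvDiscrete} for the combinatorial reformulation, Abel summation with iterated integration by parts (condition (i) killing the boundary terms) to reach $\zeta(s)=\prod_{k=0}^{r-1}(s+k)\int_1^\infty\mu_r(t)t^{-(s+r)}\,dt$, and then the substitution $t=e^x$ plus Bernstein's theorem and uniqueness of the Laplace transform. Your exponent $e^{-(r-1)x}$ after the substitution is in fact the correct one (the paper's displayed $e^{-rx}$ is a harmless slip, as the $r=2$ case with $e^{-x}$ confirms), and since that factor is positive it changes nothing in either argument.
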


\begin{proof}
 The bulk of the proof follows the calculations  preceeding theorem \ref{lemma:KaNo63D}, which use a well-known result due to Abel, which we restate here for convenience.

 \begin{eqnarray*}
\zeta(s)=\sum_{n=0}^x\frac{a_n}{n^s}=\frac{\mu_1}{x^s}+s\int_1^x\mu_1(t)t^{-s-1}\,dt,
\end{eqnarray*}
with $x$ large enough to capture all the terms.

If $s>0$, we note that the first term on the RHS vanishes. We can integrate the second term on the RHS by parts, and the first term of the result vanishes as well. Continuing in this manner, integrating by parts $r-1$ times (where $r\geq 2\in \mathbb{Z}$) yields
 \begin{eqnarray*}
\zeta(s)=\Pi_{k=0}^{r-1}(s+k)\int_1^\infty\mu_{r}(t)t^{-(s+r)}\,dt.
\end{eqnarray*}
Dividing by $\Pi_{k=0}^{r-1}(s+k)$, multiplying both sides of the equation by $(-1)^r$,  and making the change of variables $t=e^x$, we obtain
 \begin{eqnarray}\label{eq:cm=LT}
\frac{(-1)^r\zeta(s)}{\Pi_{k=0}^{r-1}(s+k)}=\int_0^\infty(-1)^r\mu_{r}(e^x)e^{-rx}e^{-sx}\,dx.
\end{eqnarray}
It follows from  lemma \ref{lemma:KNrconvDiscrete} that  $(-1)^r\mu_{r}(e^x)e^{-rx}\geq 0$ for any $r$. Similar to before, we note that equation (\ref{eq:cm=LT}) is precisely the Laplace transform of the function $(-1)^r\mu_{r}(e^x)e^{-rx}$, and we conclude  $\frac{(-1)^r\zeta(s)}{\Pi_{k=0}^{r-1}(s+k)}$ is a completely monotone function on $(0,\infty)$.

For the reverse direction, looking at  equation (\ref{eq:cm=LT}) and noting Bernstein's theorem, the result follows from lemma \ref{lemma:KNrconvDiscrete}.
\end{proof}

Note that theorem \ref{thm:DSrconv} is a direct generalization of theorem \ref{lemma:KaNo63D} with the same setup as before, namely $a_n=\# \{ i:y_i=n\} -\# \{ i:x_i=n\} $ for all $n\in \mathbb{N}$ (and using a general Dirichlet polynomial in the case of non-integer values). The inequality $\sum_na_n\chi_n\geq 0$ holding for all $r$-convex sequences (rather than for all convex sequences, as is the case for $x\prec y$) is a more generalized partial order on real vectors.

\medskip

We note that we can write $\frac{\zeta(s)}{\prod_{k=0}^{r-1}(s+k)}$ as a Mellin transform.  We remind the reader of the truncated power notation used in the theory of splines $(1-nx)^{r-1}_{+}=(\max(1-nx,0))^{r-1}$.
We are now ready to state our result:

\begin{proposition} Let $\zeta(s):=\sum_{n=1}^{\infty} \frac{a_n}{n^s}$.  Then $\frac{\zeta(s)}{\prod_{k=0}^{r-1}(s+k)}$ is the Mellin transform of $\frac{1}{(r-1)!}\sum_{n=1}^{\infty} a_n(1-nx)^{r-1}_{+}$.   \end{proposition}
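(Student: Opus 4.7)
The plan is to verify the proposition by a direct computation: write out the Mellin transform of $\frac{1}{(r-1)!}\sum_{n=1}^{\infty} a_n(1-nx)^{r-1}_{+}$ against the definition $\phi(s)=\int_0^\infty f(x)x^{s-1}\,dx$, and show it equals $\frac{\zeta(s)}{\prod_{k=0}^{r-1}(s+k)}$. Since $\zeta(s)$ is a (general) Dirichlet polynomial in the sense used throughout the paper—only finitely many $a_n$ are nonzero—we may freely interchange the finite sum with the integral, which reduces matters to evaluating the single integral $I_n(s):=\int_0^\infty (1-nx)^{r-1}_{+}\,x^{s-1}\,dx$ for each $n$ with $a_n\neq 0$.

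For $I_n(s)$, I would use that $(1-nx)_+$ vanishes for $x>1/n$ to restrict the integration to $[0,1/n]$, then apply the substitution $u=nx$. This yields $I_n(s)=\frac{1}{n^s}\int_0^1 (1-u)^{r-1}u^{s-1}\,du = \frac{1}{n^s}B(s,r)$, the Beta integral. The classical evaluation
\[
B(s,r)=\frac{\Gamma(s)\Gamma(r)}{\Gamma(s+r)}=\frac{(r-1)!}{\prod_{k=0}^{r-1}(s+k)}
\]
then gives $I_n(s) = \frac{(r-1)!}{n^s\prod_{k=0}^{r-1}(s+k)}$.

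Summing over $n$ and multiplying by $\frac{1}{(r-1)!}$, the factorials cancel and the factor $\prod_{k=0}^{r-1}(s+k)^{-1}$ pulls out of the sum, leaving $\frac{1}{\prod_{k=0}^{r-1}(s+k)}\sum_n \frac{a_n}{n^s}=\frac{\zeta(s)}{\prod_{k=0}^{r-1}(s+k)}$, as desired. There is no real obstacle: the argument is a one-line interchange of sum and integral followed by a standard Beta-function evaluation. The only point worth flagging is that, as in the preceding sections, the formal series $\sum_{n=1}^\infty a_n/n^s$ is understood to have only finitely many nonzero terms, so convergence and Fubini are automatic; in the non-integer ``general Dirichlet polynomial'' setting the same argument goes through verbatim with $1/n^s$ replaced by $e^{-\lambda_n s}$ and the substitution $u=e^{\lambda_n}x$ playing the role of $u=nx$.
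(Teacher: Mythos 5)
Your proof is correct, and it reaches the same formula by a slightly different route than the paper. The paper evaluates the Mellin transform of $(1-nx)_{+}^{r-1}$ directly by setting up a two-parameter family $g_{q,r}$ (the Mellin transform of $x^q(1-nx)_+^{r-1}$), deriving the recursion $g_{q,r}=\frac{n(r-1)}{q+s}g_{q+1,r-1}$ by integration by parts, and iterating down to the base case $g_{q,0}$ before specializing $q=0$. You instead pull out the $n$-dependence first via the substitution $u=nx$, obtaining $I_n(s)=n^{-s}B(s,r)$, and then quote the classical Beta--Gamma evaluation $B(s,r)=\Gamma(s)\Gamma(r)/\Gamma(s+r)=(r-1)!/\prod_{k=0}^{r-1}(s+k)$. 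Since the Beta identity is itself proved by the same integration-by-parts recursion, the two arguments are computationally equivalent; your version is shorter and makes the scaling structure (the factor $n^{-s}$) transparent, while the paper's is self-contained and does not invoke the Gamma function. Your remarks on the finiteness of the sum (so the sum--integral interchange is trivial) and on the general Dirichlet case, where $e^{\lambda_n}$ plays the role of $n$, are both accurate and consistent with the paper's conventions; the only implicit hypothesis in either argument is $s>0$ (so the Beta integral converges), which is all that is needed since the transform is used on $(0,\infty)$.
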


\begin{proof}
The proof is a straightforward calculation using integration by parts.

To simplify the manipulations of expressions, let $g_{q,r}$ be the Mellin transform of $x^q(1-nx)_+^{r-1}$. Note that the expression $\big((1-nx)_+\big)^0$ is $1$ for $x\in [0, 1/n]$ and $0$ for $x>1/n$. This yields
\begin{eqnarray*}
g_{q,0}&=&\int_0^\infty x^q\chi_{[0,1/n]}(x)x^{s-1}\,dx\\
&=&\int_0^{1/n}x^{q+s-1}\,dx\\
&=&\frac{n^{-(q+s)}}{q+s}.
\end{eqnarray*}
For $r-1>0$, we use integration by parts with $u=x^q(1-nx)^{r-1}$ and $dv=x^{s-1}dx$ so that
\begin{eqnarray*}
g_{q,r}&=&\int_0^\infty x^q(1-nx)_+^{r-1}(x)x^{s-1}\,dx\\
&=&(1-nx)^{r-1}\frac{x^{q+s}}{q+s}\big|_0^{1/n}+\frac{n(r-1)}{q+s}\int_0^{1/n}(1-nx)^{r-2}x^{q+s}\,dx\\
&=&\frac{n(r-1)}{q+s}\int_0^{1/n}(1-nx)^{r-2}x^{q+s}\,dx\\
&=& \frac{n(r-1)}{q+s}g_{q+1, r-1}.
\end{eqnarray*}

In general, we obtain
\[
g_{q,r}=\frac{n^{-(q+s)}(r-1)!}{(q+s)(q+s+1)\cdots(q+s+r-1)}.
\]

Letting $q=0$, we have
\[
g_{0,r}=\frac{n^{-s}}{\prod_{k=0}^{r-1}(s+k)}(r-1)!
\]

Since  the Mellin transform of $(1-nx)_+^{r-1}$ is $g_{0,r}$, we see that
the Mellin transform of $\frac{1}{(r-1)!}\sum_{n=1}^{\infty} a_n(1-nx)^{r-1}_{+}$ is
\begin{eqnarray*}
\frac{1}{(r-1)!}\sum_{n=1}^\infty a_n g_{0,r}&=&\frac{\sum_{n=1}^{\infty} a_nn^{-s}}{\prod_{k=0}^{r-1}(s+k)}\\
&=&\frac{\zeta(s)}{\prod_{k=0}^{r-1}(s+k)},
\end{eqnarray*} as desired.
\end{proof}

We now prove a generalization of Turgut's theorem.  We first state a lemma from \cite{Tur07}.  This lemma is a special case of a result of P\'{o}lya \cite{Pol28}.  This result was rediscovered in \cite{Tur07}.

\begin{lemma}\label{lem:Pol} Let $f(x)$ be a real polynomial which is positive on $(0,\infty)$.  Then there exists two real polynomials $g(x)$ and $h(x)$ with all coefficient nonnegative such that $f(x)g(x)=h(x)$.  \end{lemma}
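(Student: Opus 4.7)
The strategy is to reduce to handling each $\R$-irreducible factor of $f$ separately, and then to show that every such factor admits a nonnegative-coefficient multiplier whose product also has nonnegative coefficients. Since polynomials with nonnegative coefficients are closed under multiplication, if for every irreducible factor $F$ of $f$ I produce a multiplier $g_F$ such that both $g_F$ and $g_F\cdot F$ have nonnegative coefficients, then $g := \prod g_F$ and $h := fg = \prod (g_F\cdot F)$ satisfy the conclusion.

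Because $f>0$ on $(0,\infty)$, the $\R$-factorization of $f$ consists of a positive leading constant, linear factors $(x+\alpha)$ with $\alpha\ge 0$ (no positive real roots are allowed), and irreducible quadratics $x^2+bx+c$ with $c>0$ and $b^2<4c$. The first two types of factors already have nonnegative coefficients, so the only nontrivial case is when a quadratic $q(x)=x^2-2ax+c$ has $a>0$ and $a^2<c$. It therefore suffices to show that for each such $q$, the polynomial $(1+x)^N q(x)$ has nonnegative coefficients once $N$ is large enough.

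To verify this, substitute $z=x/(1+x)$, so that $1-z=1/(1+x)$, and define
$$Q(z) := (1-z)^2\, q\bigl(z/(1-z)\bigr) = (1+2a+c)z^2 - (2a+2c)z + c.$$
Then $Q$ is strictly positive on $[0,1]$: on $[0,1)$ because $q>0$ on $[0,\infty)$ and $(1-z)^2>0$, and at $z=1$ because $Q(1)=1$. Invoking the well-known fact that any polynomial strictly positive on $[0,1]$ admits, for $N$ large enough, a Bernstein expansion $Q(z) = \sum_{k=0}^N c_k \binom{N}{k} z^k(1-z)^{N-k}$ with all $c_k>0$, and substituting back via $z^k(1-z)^{N-k} = x^k/(1+x)^N$ and $Q(z) = q(x)/(1+x)^2$, one obtains $(1+x)^{N-2} q(x) = \sum_{k=0}^N c_k \binom{N}{k} x^k$, which has nonnegative coefficients.

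The main obstacle is justifying the Bernstein-positivity step invoked above, namely that a polynomial strictly positive on $[0,1]$ has, for sufficiently large $N$, all positive Bernstein coefficients in the basis $\{\binom{N}{k} z^k(1-z)^{N-k}\}_{k=0}^N$. This is a classical fact closely related to P\'{o}lya's positivstellensatz. As a direct alternative, the coefficient of $x^m$ in $(1+x)^N q(x)$ equals $c\binom{N}{m} - 2a\binom{N}{m-1} + \binom{N}{m-2}$; dividing by $\binom{N}{m-1}$ and letting $N\to\infty$ with $v = (m-1)/(N-m+2)$ fixed, the resulting bracket converges to $q(v)/v$, whose minimum $2(\sqrt{c}-a)$ on $(0,\infty)$ is strictly positive, and the boundary indices $m\in\{0,1,N+1,N+2\}$ can be handled by explicit estimates.
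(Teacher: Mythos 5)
The paper itself offers no proof of this lemma: it is quoted from Turgut and attributed to P\'{o}lya's 1928 result, so there is no internal argument to compare against. Your proof follows the classical Poincar\'{e}--P\'{o}lya route, and its skeleton is sound: positivity of $f$ on $(0,\infty)$ forces a positive leading coefficient and real roots only in $(-\infty,0]$, so after factoring over $\R$ the only problematic irreducible factors are quadratics $q(x)=x^2-2ax+c$ with $0<a<\sqrt{c}$, and it suffices to multiply each such factor by a large power of $(1+x)$; closure of nonnegative-coefficient polynomials under products then assembles $g$ and $h$. The reduction to irreducible factors also quietly handles the behaviour at $0$ and at infinity, where a direct P\'{o}lya-type argument applied to $f$ itself would need care. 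Be aware, though, that your first justification (a polynomial strictly positive on $[0,1]$ eventually has all positive Bernstein coefficients) is the dehomogenized form of P\'{o}lya's simplex theorem, i.e.\ essentially the very statement being proved, so leaning on it would be circular in spirit; the substantive content of your proof is the ``direct alternative.''

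That alternative has one genuine soft spot: positivity of the limit $q(v)/v$ for each fixed $v=(m-1)/(N-m+2)$ does not by itself produce a single $N$ for which the coefficient of $x^m$ in $(1+x)^N q(x)$ is nonnegative for \emph{all} $m$ simultaneously; you need an estimate uniform in $m$. It is easy to supply: writing $p=m-1$ and $q'=N-m+2$, the normalized coefficient is $c\frac{q'-1}{p+1}+\frac{p}{q'}-2a \ge 2\sqrt{c}\,\sqrt{\frac{p}{p+1}\cdot\frac{q'-1}{q'}}-2a$ by the AM--GM inequality, and the right-hand side is positive as soon as $p$ and $q'$ both exceed a threshold depending only on $a$ and $c$; if instead one of $p,q'$ lies below that threshold while $N$ is large, then one of the two terms $c\frac{q'-1}{p+1}$ or $\frac{p}{q'}$ alone already exceeds $2a$. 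With this routine uniformity added, and the few boundary indices checked directly as you indicate, your argument is a complete and correct proof of the lemma.
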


\begin{theorem} \label{thm:mainresult} Let $\zeta(s)$ be a generalized Dirichlet polynomial with simple zeros at $s=0,-1,...,-(r-2),-(r-1)$.  Then $\frac{\zeta(s)}{\prod_{k=0}^{r-1}(s+k)}$ is positive on the real line if and only if there exists a generalized Dirichlet polynomial $\zeta_2(s)\not\equiv 0$ with non-negative coefficients such that $\frac{\zeta(s)\zeta_2(s)}{\prod_{k=0}^{r-1}(s+k)}$ is completely monotone on $(0,\infty )$. \end{theorem}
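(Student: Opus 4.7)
The forward direction is straightforward. Because $\zeta$ has simple zeros at $s=0,-1,\ldots,-(r-1)$ that cancel the simple poles of $\frac{1}{\prod_{k=0}^{r-1}(s+k)}$, the quotient $\frac{\zeta(s)\zeta_2(s)}{\prod_{k=0}^{r-1}(s+k)}$ is entire. Lemma \ref{easy} then implies that this non-zero entire completely monotone function is strictly positive on $\mathbb{R}$. Since $\zeta_2\not\equiv 0$ has non-negative coefficients, it is a positive combination of strictly positive exponentials and hence strictly positive on $\mathbb{R}$; dividing one by the other gives $\frac{\zeta(s)}{\prod_{k=0}^{r-1}(s+k)}>0$ on $\mathbb{R}$.

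For the reverse direction, the plan is to apply P\'olya's lemma (Lemma \ref{lem:Pol}) after converting the Dirichlet polynomial into an ordinary polynomial. Assuming that all exponents $\lambda_n$ of $\zeta$ are integer multiples of $1/q$ for some positive integer $q$, the substitution $y=e^{-s/q}$ turns $\zeta(s)$ into a polynomial $P(y)$. The prescribed simple zeros at $s=0,-1,\ldots,-(r-1)$ correspond to simple zeros of $P$ at $y=1,e^{1/q},\ldots,e^{(r-1)/q}$, so one factors $P(y)=Q(y)\prod_{k=0}^{r-1}(y-e^{k/q})$. Tracking signs on each interval $(e^{(j-1)/q},e^{j/q})$ (and on the two unbounded intervals $(0,1)$ and $(e^{(r-1)/q},\infty)$), the positivity of $\frac{\zeta(s)}{\prod(s+k)}$ on $\mathbb{R}$ translates into $(-1)^r Q(y)>0$ for all $y\in(0,\infty)$. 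P\'olya's lemma then furnishes polynomials $g(y),h(y)$ with non-negative coefficients satisfying $(-1)^r Q(y)\,g(y)=h(y)$.

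Setting $\zeta_2(s):=g(e^{-s/q})$, a generalized Dirichlet polynomial with non-negative coefficients, and using the identity $y-e^{k/q}=-e^{k/q}(1-e^{-(s+k)/q})$, a direct computation gives
\[
\frac{\zeta(s)\zeta_2(s)}{\prod_{k=0}^{r-1}(s+k)} = e^{r(r-1)/(2q)}\,h(e^{-s/q})\prod_{k=0}^{r-1}\frac{1-e^{-(s+k)/q}}{s+k}.
\]
Each factor $\frac{1-e^{-(s+k)/q}}{s+k}=\int_0^{1/q}e^{-tk}e^{-ts}\,dt$ is the Laplace transform of a non-negative measure on $[0,1/q]$ and is therefore completely monotone on $(0,\infty)$; the Dirichlet polynomial $h(e^{-s/q})$ with non-negative coefficients is likewise completely monotone. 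Lemma \ref{lem:semi-g} (products of completely monotone functions are completely monotone) then yields the desired complete monotonicity on $(0,\infty)$. The principal obstacle is the commensurability assumption: when the $\lambda_n$ are not rationally related, no single substitution $y=e^{-s/q}$ produces an ordinary polynomial, and one must either invoke a multivariate analogue of P\'olya's theorem along the exponential curve in $\mathbb{R}^d_+$ or approximate $\zeta$ by a commensurable Dirichlet polynomial while preserving the simple-zero structure at $s=0,-1,\ldots,-(r-1)$ and pass to a suitable limit.
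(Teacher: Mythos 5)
Your proposal is correct and takes essentially the same route as the paper: the forward direction via Lemma \ref{easy} together with strict positivity of $\zeta_2$, and the converse by turning $\zeta$ into an ordinary polynomial through an exponential substitution, factoring out the prescribed simple roots, invoking P\'olya's lemma (Lemma \ref{lem:Pol}), and applying Lemma \ref{lem:semi-g} to a product of completely monotone factors, with the incommensurable-exponent case deferred to an approximation argument just as in the paper. If anything, your identity $\frac{1-e^{-(s+k)/q}}{s+k}=\int_0^{1/q}e^{-tk}e^{-ts}\,dt$ justifies the complete monotonicity of the individual factors more carefully than the paper's phrasing, which attributes complete monotonicity to $e^{\alpha k}-e^{-\alpha s}$ itself rather than to the quotient.
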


\begin{proof}  Suppose $g(s)=\frac{\zeta(s)\zeta_2(s)}{\prod_{k=0}^{r-1}(s+k)}$ is completely monotone on $(0,\infty )$.  By lemma \ref{easy}, $g(s)$ must be positive on the real line.  Since $\zeta_2(s)$ is strictly positive on the real line, $\frac{g(s)}{\zeta_2(s)}=\frac{\zeta(s)\zeta_2(s)}{\prod_{k=0}^{r-1}(s+k)}$ must be strictly positive on the real line.

The proof of other direction is modeled after the proof of the main result in \cite{Tur07}. As in \cite{Tur07}, we begin with the special case that $\zeta(s)=\sum_{n=0}^k a_n e^{-\lambda_ns}$ where $\lambda_n=n\alpha$ where $\alpha$ is some positive number.  Let $p(x)=\sum_{n=0}^{k}a_nx^n$, then $\zeta(s)=p(e^{-\alpha s})$.  If the only real zeros of $\zeta(s)$ are simple zeros at $-(r-1),-(r-2),...,-1,0$ then the only positive real zeros of $p(x)$ are simple zeros at $1,e^{\alpha},e^{2\alpha},...,e^{(r-1)\alpha}$.  Hence $p(x)=f(x)\prod_{k=0}^{r-1}(e^{k\alpha}-x)$ where $f(x)$ is a real polynomial which is positive on $(0,\infty)$.  Hence, by lemma \ref{lem:Pol},  there exists two real polynomials $g(x)$ and $h(x)$ with all coefficients nonnegative such that $f(x)g(x)=h(x)$.  Let $\zeta_2(s)=g(e^{-\alpha s})$, then $\zeta_2(s)$ is a generalized Dirichlet polynomial with nonnegative coefficients. Note that $\frac{\zeta(s)\zeta_2(s)}{\prod_{k=0}^{r-1}(s+k)}=h(e^{-\alpha s})\prod_{k=0}^{r-1}\frac{e^{\alpha k}-e^{-\alpha s} }{s+k}$.  Since $h(e^{-\alpha s})$ is a generalized Dirichlet polynomial with nonnegative coefficients it is completely monotone on $(0,\infty)$.  The functions  $e^{\alpha k}-e^{-\alpha s}$ and $\frac{1}{s+k}$ are completely monotone on $(0,\infty)$ for all $k\in \mathbb{N}\cup \{ 0\}$.  Therefore the product $\frac{\zeta(s)\zeta_2(s)}{\prod_{k=0}^{r-1}(s+k)}$ is completely monotone on $(0,\infty )$ by lemma \ref{lem:semi-g}.  The general case of this result now follows from an approximation argument similar to that of \cite{Tur07}.

\end{proof}

Theorem \ref{turg}, which is the main result of \cite{Tur07}, can easily be seen as a corollary: when $r=2$, the statement of our theorem is as follows:

\begin{corollary}
 Let $\zeta(s)$ be a generalized Dirichlet polynomial with simple zeros at $s=0,-1$.  Then $\frac{\zeta(s)}{s(s+1)}$ is positive on the real line if and only if there exists a generalized Dirichlet polynomial $\zeta_2(s)\not\equiv 0$ with non-negative coefficients such that $\frac{\zeta(s)\zeta_2(s)}{s(s+1)}$ is completely monotone on $(0,\infty )$.
 \end{corollary}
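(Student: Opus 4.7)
The plan is direct: the corollary is precisely the $r=2$ specialization of Theorem \ref{thm:mainresult}, so the proof consists of checking that the hypotheses and conclusion match under this substitution and then invoking the theorem. For the verification, I would observe that $\prod_{k=0}^{r-1}(s+k)$ at $r=2$ is exactly $s(s+1)$, and that the requirement of simple zeros of $\zeta$ at $s = 0, -1, \dots, -(r-1)$ collapses to simple zeros at $s=0$ and $s=-1$. No other clause of Theorem \ref{thm:mainresult} depends on $r$, so each assertion in the corollary is a verbatim instance of the theorem at $r=2$.

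As a consistency check I would mentally retrace both directions of the equivalence. The easy direction is Lemma \ref{easy}: complete monotonicity of $\frac{\zeta(s)\zeta_2(s)}{s(s+1)}$ on $(0,\infty)$, together with the fact that this expression is entire (the simple zeros of $\zeta\zeta_2$ at $s=0,-1$ cancel the denominator), forces positivity on all of $\mathbb{R}$; dividing by $\zeta_2(s)$, which is strictly positive on $\mathbb{R}$ since its coefficients are nonnegative and not all zero, then yields positivity of $\frac{\zeta(s)}{s(s+1)}$. The substantive direction uses the P\'olya-type Lemma \ref{lem:Pol}: after writing $\zeta(s) = p(e^{-\alpha s})$ in the commensurable case, positivity of the residual factor $f$ on $(0,\infty)$ produces the catalyst, and the approximation argument from \cite{Tur07} handles incommensurable frequencies.

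The main payoff, rather than an obstacle, is combining the corollary with Observation \ref{Obs:Trump} to obtain the succinct restatement of Turgut's theorem announced in the Proposition preceding Section \ref{sec:r-convex}: namely, for $x,y$ with entries in $(1,\infty)$ and $\zeta(s) = \sum_i y_i^{-s} - \sum_i x_i^{-s}$, we have $x \prec_T y$ if and only if $\zeta$ has simple zeros at $0$ and $-1$ with $\zeta(s)/(s(s+1))$ positive on all of $\mathbb{R}$. Since this is the stated motivation for Theorem \ref{thm:mainresult}, writing the corollary out explicitly and linking it to Observation \ref{Obs:Trump} is really all that is required.
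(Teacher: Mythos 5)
Your proposal is correct and matches the paper's treatment: the corollary is obtained simply by setting $r=2$ in Theorem \ref{thm:mainresult}, so that $\prod_{k=0}^{r-1}(s+k)$ becomes $s(s+1)$, and the paper likewise notes the resulting statement coincides with Observation \ref{Obs:Trump}. Your additional retracing of the two directions is a harmless consistency check but not needed beyond the specialization itself.
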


 This is exactly observation \ref{Obs:Trump}.

\section{Conclusion}

We have found a characterization of the majorization relation between two vectors in terms of the complete montonicity of an associated Dirichlet polynomial. Since the majorization relation describes the transformations of bipartite states by LOCC, this result may prove useful in future studies of such transformations.  Furthermore, since Dirichlet series and completely monotone functions have been studied extensively by mathematicians, there are potentially a large number of mathematical results which may prove useful in this area.  An example can be found in this paper where these ideas were used to simplify the derivation of Turgut's theorem and give a generalization of this result to higher order convex functions.  While the generalization to higher ordered convex functions does not currently have a physical application, the simplification of Turgut's proof shows that the techniques used in this paper are useful in the analysis of entanglement transformations and catalysis.

\section*{Acknowledgements}  We thank the referees for many helpful suggestions which improved this paper.  R.P. was supported by NSERC Discovery Grant 400550. S.P. was supported by an NSERC Doctoral Scholarship.

\end{document}